\newtheorem{Theorem}{Theorem}
\def\scrM{\mathscr{M}}
\def\sfM{\mathsf{M}}
\newtheorem{Lemma}{Lemma}
\newtheorem{Corollary}{Corollary}
\newtheorem{Definition}{Definition}
\newcommand{\K}{\mathbb{K}}
\newcommand{\R}{\mathcal R_k}
\newcommand{\remx}{~\mathrm{rem}_x~}
\newcommand{\remy}{~\mathrm{rem}_y~}
\renewcommand{\Xi}[1]{[x^i]#1}
\renewcommand{\xi}[1]{#1_i}
\newcommand{\Xv}[1]{[x^{\v(#1)}]#1}
\newcommand{\N}[1]{\widetilde{#1}}
\newcommand{\Seq}{\mathcal{S}}
\newcommand{\Q}{\mathcal{Q}}
\renewcommand{\H}{\mathcal{H}}
\renewcommand{\v}{\upsilon}
\newcommand{\Id}{\mathrm{Id}}
\renewcommand{\O}{\mathcal{O}}
\newcommand{\U}{\mathcal{U}}
\newcommand{\m}{\mathbf{M}}
\newcommand{\s}{s}
\renewcommand{\u}{u}
\renewcommand{\t}{t}
\newcommand{\Res}{\,\mathrm{Res}\,}
\newcommand{\ResIn}{\text{Res}}
\newcommand{\generichalfgcd}{\textsc{GenericHalfGcd}\,}
\newcommand{\mat}[1]{\mathbf{#1}}
\newcommand{\mydivy}{~\mathrm{div}_y~}
\def\Lc{\mathrm{Lc}}
\title{A Fast Algorithm for Computing the Truncated Resultant}
\author{
\alignauthor Guillaume Moroz\\
\affaddr{Inria Nancy Grand Est}\\
  \email{{guillaume.moroz@inria.fr}}
\alignauthor \'Eric Schost\\
  \affaddr{University of Waterloo}\\
  \email{{eschost@uwaterloo.ca}}
}
\date\today
\begin{document}
\maketitle

\begin{abstract}
  Let $P$ and $Q$ be two polynomials in $\K[x,y]$ with degree at most
  $d$, where $\K$ is a field. Denoting by $R\in \K[x]$ the resultant
  of $P$ and $Q$ with respect to $y$, we present an algorithm to
  compute $R \bmod x^k$ in $ \O\tilde{~}(kd)$ arithmetic operations in
  $\K$, where the $\O\tilde{\ }$ notation indicates that we omit
  polylogarithmic factors. This is an improvement over
  state-of-the-art algorithms that require to compute $R$ in
  $\O\tilde{~}(d^3)$ operations before computing its first $k$
  coefficients.
\end{abstract}





\section{Introduction}

Computing the resultant of two polynomials is an ubiquitous question
in symbolic computation, with applications to polynomial system
solving~\cite{GiLeSa01}, computational
topology~\cite{GoKa96,BeEmSa11,EmSa12,BoLaPoRo13,BoLaPoRo13b,BoLaMoPoRo14,KoSa15,BoLaMoPoRoSa15},
Galois theory~\cite{Soicher81,Lehobey97,AuVa00,LeSc12}, computations
with algebraic numbers~\cite{BoFlSaSc06},~etc.

From the complexity viewpoint, this question admits a satisfactory
answer in the simplest case of polynomials $P,Q$ with coefficients in
a field $\K$.  Euclid's algorithm can be adapted to compute the
resultant $R$ of $P$ and $Q$ in time $\O(d^2)$, assuming arithmetic
operations in $\K$ are counted at unit cost (a thorough discussion of
resultant algorithms based on Euclid's algorithm is
in~\cite{GaLu03}). Using fast polynomial multiplication and
divide-and-conquer techniques, the Knuth-Sch\"onhage half-gcd
algorithm~\cite{Knu70,Sch71} allows one to compute $R$ in time
$\O\tilde{~}(d)$. This is optimal, up to logarithmic factors, since
the input has size $\Theta(d)$.

However, no such quasi-linear result is known in the important case of
bivariate polynomials over $\K$, or in the very similar case of
univariate polynomial with integer coefficients (in which case one
would be interested in bit complexity estimates). In the former
situation, suppose we consider two polynomials $P$ and $Q$ in
$\K[x,y]$, with degree at most $d$, and we want to compute their
resultant $R$ with respect to $y$, so that $R$ is in $\K[x]$. The
polynomial $R$ has degree at most $d^2$, so both input and output
can be represented using $\Theta(d^2)$ elements in $\K$. However, the best
known algorithms to compute $R$ take $\O\tilde{~}(d^3)$ operations in
$\K$, either by means of evaluation~/ interpolation techniques, or in
a direct manner~\cite{Reischert97}.

In this paper, we are interested in the computation of the resultant
$R$ of such bivariate polynomials {\em truncated at order $k$}, that
is of $R \bmod x^k$ for some given parameter $k$.  This kind of
question appears for instance in the algorithms
of~\cite{GiLeSa01,LeSc12}, where we want two terms in the expansion,
so that $k=2$. A related example, in a slightly more involved setting,
involves the evaluation of the second derivative of some
subresultants, for input polynomials in $\K[x,y,z]$~\cite{ImMoPo15}.

Of course, one could simply compute $R$ itself and truncate it
afterwards; however, it seems wasteful to compute all $d^2$ terms of
$R$, incurring a cost of $\O\tilde{~}(d^3)$, before discarding many of
them. Now, for all but finitely many values $a$ in $\K$, it is
possible to compute $R \bmod (x-a)^k$ using $\O\tilde{~}(d k)$
operations in $\K$: indeed, as soon as the non-zero subresultants of
$P$ and $Q$ do not vanish at $a$, we can run the Knuth-Sch\"onhage
algorithm with coefficients truncated modulo $(x-a)^k$, without attempting
to invert an element that would vanish at $a$. The running time
claimed above then follows from the fact that arithmetic operations in
$\K[x]/(x-a)^k$ can be done using $\O\tilde{~}(k)$ operations in $\K$
(for such standard complexity results, our reference
is~\cite{GGbook13}).

If however we cannot choose the expansion point, as is the case here,
there is no guarantee that all divisions remain feasible in
$\K[x]/\langle x^k\rangle$; attempting to divide by an element of positive valuation
would entail a loss of $x$-adic precision. 

An obvious solution is to use a {\em division-free} algorithm over
$\K[x]/\langle x^k\rangle$ (by contrast, so-called fraction-free algorithms often
require the base ring to be a domain).  The best result we are aware
of is due to Kaltofen and Villard~\cite{KaVi04}, with a cost of
$\O(d^{2.698})$ ring operations to compute the determinant of a matrix
of size $d$ over any ring, and thus $\O\tilde{~}(d^{2.698} k)$
operations in $\K$ to solve our problem.

In~\cite{Caruso15}, Caruso studies the phenomenon of loss of precision
in  the iterative version of the subresultant algorithm. He  shows that on
average, if the base field is finite, this loss of
precision grows
linearly with the degree of the inputs. In that same reference, he
also shows how to modify this algorithm to reduce the loss of
precision, resulting in a cost of $\O\tilde{~}(d^2 (k + \delta))$ operations in
$\K$, where
$\delta$ is the maximum of the $x$-adic valuations of the non-zero leading
subresultants of the input polynomials (under the assumption that
these leading subresultants all have valuation less than $k/2$). When
$\K$ is finite, the expected value of $\delta$ is $\O(\log(d))$, so
that the average running time becomes $\O\tilde{~}(d^2 k)$.

Our main result is a complexity estimate for the computation of $R
\bmod x^k$, using the Knuth-Sch\"onhage divide-and-conquer
algorithm. We show that we can compute $R \bmod x^k$ using
$\O\tilde{~}(d k)$ base field operations, when $\K$ has characteristic 
zero, or at least $k$.

We proceed in three steps. First, we compute cofactors $U,V$ and an
integer $t$ such that $UV + PQ = x^t \bmod x^{t+1}$ holds in
$\R=\K[x,y]/\langle x^k\rangle$; this is done in Section~\ref{sec:pseudoinv} by a
suitable adaptation of the half-gcd algorithm. From this equality, we
will able to deduce a first-order linear differential equation
satisfied by the resultant $R$ (Section~\ref{sec:diffeq}). Solving
this differential equation is straightforward, once an initial
condition is known; Section~\ref{sec:first-coeff} shows how to compute
the first non-zero term in the resultant.


\section{Preliminaries}

To a polynomial $P$ of $\R$, we associate a valuation $\v(P)$
defined as the smallest exponent on $x$ in the monomials of $P$ for
$P$ non-zero; by convention, $\v(0) = k$. By Gauss' Lemma, for
any polynomials $P$ and $Q$ in $\R$, we have $\v(PQ) = \v(P) + \v(Q)$
if the sum is less than $k$ (otherwise, $PQ=0$). In all the algorithms
of the paper, a polynomial $P$ of $\R$ is represented by the monomial
$x^{\v(P)}$ and the polynomial $P/x^{\v(P)}$ of valuation 0. We do not
take into account the boolean cost of exponent manipulations, in the
sense that we assume that we can add two valuations in constant time.
We define the valuation $\v(P,Q)$ of a pair of polynomials
$(P,Q) \in \R^2$ as the minimum of the valuations of $P$ and
$Q$. We let $\Seq_k \subset \R^2$ be the set of all pairs of polynomials
$(P,Q)$ with $\v(P)<\v(Q)$, to which we adjoin $(0,0)$. 

For $n\in\mathbb N$, we define $\Pi_n$ as the function from
$\R$ to itself, and by extension from $\Seq_k$ to itself, such
that:
\begin{align*}
    \Pi_n(P) &= P \remx x^{\v(P)+n}\\
    \Pi_n(P,Q) &= (P \remx x^{\v(P,Q)+n}, Q \remx x^{\v(P,Q)+n});
\end{align*}
if $\v(P)+n \ge k$ in this definition, we simply replace it by $k$. We
denote by $\U(\R)$ the set of invertible elements of $\R$; they are
exactly the polynomials $P$ such that $\v(P)=0$ and $\Pi_1(P)\in\K$.

We call {\em degree} of an element of $\K[x]/\langle x^k\rangle$ the degree of its
canonical lift to $\K[x]$.  We will often write expressions of the
form $Q=P/x^{\v(P)}$, for some $P$ in $\R$. Such a quotient is not
unique; we remove ambiguities by requiring that all coefficients of
$Q$ have degree in $x$ less than $k-\v(P)$.

If $P$ is a polynomial in $\R$, then $\Xi P \in \K[y]$ is the
coefficient of $x^i$ in it. Then, we say that $P$ is \emph{normal} if
$\deg_y(\Xv{P}) > \deg_y(\Xi{P})$ for all $i > \v(P)$. Equivalently,
this is the case if the leading coefficient of $P$
has the form $c x^{\v(P)}$, for some non-zero $c$ in $\K$.

Being normal is a useful property. For instance, it allows us to
define quotient and remainder in a Euclidean division in many cases:
if $P \in \R$ is normal and $Q \in \R$ has valuation at least $\v(P)$,
there exist $U$ and $R$ in $\R$ with $\deg(R) < \deg(P)$, and such
that $Q = UP + R$; $U$ is uniquely defined modulo $x^{k-\v(P)}$,
whereas $R$ is unique. We will write $Q \mydivy P$ for the unique
quotient $U$ with degree in $x$ less than ${k-\v(P)}$,
and $R=Q \remy P$.

The following normalization property will be essential for our
algorithms. Below, $\sfM(n)$ is a bound on the number of arithmetic
operations in $\K$ for computing the product of two polynomials $f$,
$g$ in $\K[x]$ of degree at most $n$; we assume that $\sfM$ satisfies
the super-linearity conditions of~\cite[Chapter~8]{GGbook13}. Using
the Cantor-Kaltofen algorithm~\cite{CaKa91}, we can take $\sfM(n)$ in
$\O(n \log(n) \log\log(n))$.  Using Kronecker's
substitution as in~\cite{GaSh92}, multiplication or
Euclidean division in degree $d$ in $\R$ can then be done in
$\O(\sfM(kd))$ operations in~$\K$.

\begin{Lemma}[Normalization]
    \label{lem:normalization}
    For $P$ non-zero in $\R$, there exist polynomials $U$ and $T$ in
    $\R$ such that $T$ is normal, $U$ is a unit, with $\Pi_1(U)=1$,
    and $P = UT$.

    The polynomial $T$ is unique, whereas $U$ is uniquely defined
    modulo $x^{k-\v(P)}$.  Moreover, if $P$ has degree $d$, for $n \le
    k-\v(P)$, given $\Pi_n(P)$, $\Pi_n(U)$ and $\Pi_n(T)$ can be
    computed using $\O(\sfM(dn))$ operations in $\K$.
\end{Lemma}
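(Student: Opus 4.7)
Factor out the valuation by writing $P = x^v p$ with $v = \v(P)$ and $\v(p) = 0$, and let $p_0 := \Pi_1(p) \in \K[y]$ be the leading slice, of some $y$-degree $d_0$. The statement then reduces to constructing $U, t \in \R$ of valuation $0$ with $\Pi_1(U) = 1$ and $p = Ut$, such that the $x^0$-coefficient of $t$ equals $p_0$ while each higher $x$-coefficient $t_j$ of $t$ has $y$-degree strictly less than $d_0$. Setting $T := x^v t$ then produces the desired normal factor, and $U$ is automatically a unit by the characterization of $\U(\R)$ recalled above the lemma.

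\textbf{Existence and uniqueness.} I would expand $p = \sum_{j \ge 0} p_j x^j$ and look for $U = 1 + \sum_{j \ge 1} U_j x^j$, $t = p_0 + \sum_{j \ge 1} t_j x^j$ by matching $x^j$-coefficients of $p = Ut$ inductively on $j$. For each $j \ge 1$ this reduces to
\[
    p_j - \sum_{i=1}^{j-1} U_i\, t_{j-i} \;=\; U_j\, p_0 + t_j,
\]
whose unique solution with $\deg_y t_j < d_0$ is the quotient/remainder pair of the Euclidean division by $p_0$ in $\K[y]$. This induction produces the factorization, and the uniqueness of Euclidean division forces $T$ to be unique and $U$ to be unique modulo $x^{k-v}$. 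The same induction bounds $\deg_y U_j \le d - d_0$, so $U$ and $t$ both keep $y$-degree at most $d$ throughout.

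\textbf{Complexity.} Running this iteration directly costs $\Theta(n^2 \sfM(d))$, which is too slow, so I would replace it by a Newton-style doubling. Starting from $U^{(1)} = 1$, $t^{(1)} = p_0$, each lift from precision $n$ to $2n$ forms the defect $r := (p - U^{(n)} t^{(n)})/x^n \bmod x^n$ and solves
$r \equiv U^{(n)} \Delta t + t^{(n)} \Delta U \pmod{x^n}$
under the constraint $\deg_y \Delta t < d_0$. Because the leading $y$-coefficient of $t^{(n)}$ is $\Lc(p_0) \in \K^\ast$, $t^{(n)}$ can be used as a divisor for Euclidean division in $y$ over $\K[x]/\langle x^n\rangle$; reducing the equation modulo $t^{(n)}$ yields $\Delta t \equiv (U^{(n)})^{-1} r \pmod{t^{(n)}}$, after which $\Delta U$ is recovered by one further Euclidean division in $y$. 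Each multiplication or Euclidean division at $y$-degree $O(d)$ over $\K[x]/\langle x^n\rangle$ costs $O(\sfM(nd))$ via the Kronecker substitution cited before the lemma, and the geometric sum over doubling levels stays $O(\sfM(nd))$ by super-linearity of $\sfM$. The delicate point, and main obstacle, is the modular inverse $(U^{(n)})^{-1} \bmod t^{(n)}$: to avoid an extra logarithmic factor it must be maintained by a parallel Newton iteration across precision levels rather than recomputed from scratch at each step.
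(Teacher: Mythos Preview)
Your argument is correct and is essentially the approach the paper defers to references: the coefficient-matching induction is the linear Hensel construction behind Musser's Algorithm~Q, and your Newton-style doubling with Kronecker-based arithmetic is exactly the quadratic Hensel-step analysis of \cite[Chapter~15]{GGbook13}. One minor point: recomputing $(U^{(n)})^{-1}\bmod t^{(n)}$ from scratch at each outer level by an inner Newton iteration already costs only $O(\sfM(dn))$, and by super-linearity of $\sfM$ these contributions sum to $O(\sfM(dN))$ over all levels, so maintaining the inverse across levels (as the B\'ezout cofactor in the standard Hensel step) is convenient but not actually needed to avoid the logarithmic loss you feared.
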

\begin{proof}
  Up to dividing $P$ by $x^{\v(P)}$, we may assume that $P$ has
  valuation $0$. Then, uniqueness and the corresponding algorithm are
  from Algorithm~Q in~\cite{Musser1975}; the cost analysis is a
  straightforward extension of that given for Hensel step
  in~\cite[Chapter~15]{GGbook13} using Kronecker substitution for the
  arithmetic operations. Once the result is known for
  $P/x^{\v(P)}$, we recover our claim by multiplying $T$ by
  $x^{\v(P)}$.
\end{proof}

Given $P$ in $\R$, we denote by respectively $\Lc(P)$ and $N(P)$ the
unique $U$ and $T$ that satisfy the above conditions, with $U$ having
degree in $x$ less than $k-\v(P)$; these two polynomials have degree
in $y$ at most $d$, and $\v(N(P))=\v(P)$. We also define
$\Lc_n(P)=\Pi_n(\Lc(P))$ and $N_n(P)=\Pi_n(N(P))$; by
the previous lemma, if $P$ has degree $d$, for any $n$, $\Lc_n(P)$ and
$N_n(P)$ can be computed in time $\O(\sfM(dn))$.

Using this result, we can reduce the problem of computing the
resultant of two polynomials in $\R$ to the problem of computing the
resultant of two normal polynomials.

\begin{Lemma}
  \label{lem:normalres}
  Let $P$ and $Q$ be in $\R$ of degrees in $y$ at most
  $d$. Then there exist four monic polynomials $A,B,C,D$ in $\R$, with
  degrees at most $d$, and $u$ in $\K[x]/\langle x^k\rangle$ such that
  $\Res(P,Q) = u\Res(A,B)\Res(C,D)$.
  Moreover, $A,B,C,D,u$ can be computed in time $\O(\sfM(dk))$.
\end{Lemma}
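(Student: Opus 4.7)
The plan is to use Lemma~\ref{lem:normalization} on each of $P$ and $Q$, obtaining factorizations $P = U_P T_P$ and $Q = U_Q T_Q$ in $\R$ with $T_P, T_Q$ normal and $\Pi_1(U_P) = \Pi_1(U_Q) = 1$; this already costs $\O(\sfM(dk))$. From the normal parts I extract the first pair of monic polynomials: writing the leading $y$-coefficient of $T_P$ as $c_P x^{\v(T_P)}$ with $c_P \in \K^*$, the quotient $T_P/(c_P x^{\v(T_P)})$ is well-defined modulo $x^{k-\v(T_P)}$, and any lift of it to $\R$ gives a monic polynomial $A$ of $y$-degree $\le d$; similarly for $B$ from $T_Q$. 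For the second pair I use reciprocation of the unit parts: since $U_P \equiv 1 \pmod x$, its constant-in-$y$ coefficient $u_0^P$ is a unit of $\K[x]/\langle x^k\rangle$, so $C := (u_0^P)^{-1} y^{\deg_y U_P}\, U_P(x, 1/y)$ is monic of $y$-degree $\le d$, and $D$ is defined analogously from $U_Q$.

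Next, multiplicativity of the resultant in each argument yields
\[
\Res(P,Q) = \Res(U_P,U_Q)\,\Res(U_P,T_Q)\,\Res(T_P,U_Q)\,\Res(T_P,T_Q).
\]
Substituting $T_P = c_P x^{\v(T_P)} A$ and $T_Q = c_Q x^{\v(T_Q)} B$ into the last three factors extracts explicit scalar powers of $c_P x^{\v(T_P)}$ and $c_Q x^{\v(T_Q)}$ and leaves behind $\Res(A,B)$, $\Res(U_P,B)$ and $\Res(A,U_Q)$. The reciprocation identity $\Res(f,g) = (-1)^{\deg f\,\deg g} \Res(f^\vee, g^\vee)$, combined with multilinearity, rewrites $\Res(U_P,U_Q)$ as a scalar times $\Res(C,D)$. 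Grouping everything except $\Res(A,B)$ and $\Res(C,D)$ into a single $u \in \K[x]/\langle x^k\rangle$ produces the claimed identity.

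The main obstacle is to show that $u$ is itself computable within $\O(\sfM(dk))$: the scalar factors and reciprocation constants are immediate, but $u$ also contains the two cross-resultants $\Res(U_P,B)$ and $\Res(A,U_Q)$, each of shape $\Res(1+x\gamma, M)$ with $M$ monic and $\gamma \in \R$. Each such resultant equals the determinant $\det(I + x\Gamma)$ of a perturbation of the identity on the free $\R$-module $\R[y]/(M)$ of rank $\le d$, hence is a unit of $\K[x]/\langle x^k\rangle$ congruent to $1$ modulo $x$; under the hypothesis $\mathrm{char}(\K) = 0$ or $\ge k$, Newton-identity and power-sum techniques applied to $\Gamma$ should yield this determinant in $\O(\sfM(dk))$ operations in $\K$, which is the crux of the complexity claim.
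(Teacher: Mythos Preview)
Your decomposition factors both $P$ and $Q$, which leaves the cross-resultants $\Res(U_P,B)$ and $\Res(A,U_Q)$ inside $u$. This is the gap: each is a genuine resultant of two degree-$d$ polynomials over $\K[x]/\langle x^k\rangle$, and you have not shown it can be computed in $\O(\sfM(dk))$. The Newton-identity route needs the traces $\mathrm{tr}(\Gamma^j)$ for all $j<k$; obtaining them by iterating $\gamma^j\bmod B$ (even with the $x$-precision lowered to $k-j$ at step $j$) costs $\sum_{j<k}\sfM(d(k-j))$, which is of order $k\,\sfM(dk)$, a factor $k$ too large. More fundamentally, the division by $j$ in the Newton identities requires $\mathrm{char}\,\K=0$ or $\ge k$, a hypothesis Lemma~\ref{lem:normalres} does not carry; even a completed version of your argument would therefore not prove the lemma as stated.

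The paper's proof avoids both problems with two simple moves. First, it normalises only $P$, writing $\Res(P,Q)=\Res(\Lc(P),Q)\,\Res(N(P),Q)$, so only two resultant factors remain. Second, once the first argument of each factor has been made monic (giving $A$, resp.\ $C$), it forces the second argument to be monic by the trick $B:=(\tilde Q\remy A)+A$, resp.\ $D:=(Q\remy C)+C$; since $A$ and $C$ are monic, adding them to the remainder does not change the resultant, and $B,D$ are monic of the same degree as $A,C$. No cross-resultant ever appears, and $u=(-1)^{d_cd_Q}(c_0n_0)^{d_Q}$ is then an explicit element of $\K[x]/\langle x^k\rangle$ computable by a handful of power-series multiplications, with no characteristic restriction.
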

\begin{proof}
By the multiplicative property of the resultant, 
  $\Res(P,Q) = \Res(\Lc(P), Q) \Res(N(P), Q)$. Let $d_c,d_n,d_Q$ be the
  degrees in $y$ of $\Lc(P), N(P)$ and $Q$ respectively.

  Taking the reciprocal polynomial of $\Lc(P)$ and $Q$ changes at
  most the sign of their resultant. Let us thus define $\tilde P =
  y^{d_c}\Lc(P)(1/y)$ and $\tilde Q = y^{d_Q}Q(1/y)$ and let
  $c_0\in\K[x]/\langle x^k\rangle$ be the leading coefficient of $\tilde P$.  By
  construction, $c_0$ is a unit, so we can define $A=\tilde P/c_0 \in
  \R$, and we have $\Res(\Lc(P),Q)=(-1)^{d_c d_Q} {c_0}^{d_Q}\Res(A,
  \tilde Q)$, where $A$ is  monic in $y$.  Let $R$ be the
  remainder of the Euclidean division of $\tilde Q$ by $A$ and let $B
  = R + A$; since $A$ is monic, $\Res(A, \tilde
  Q) = \Res(A,B)$.

  Similarly, because $N(P)$ is normal, we can write it as $N(P) =
  n_0 C$, where $n_0 \in \K[x]/\langle x^k\rangle$ is its leading coefficient and $C$
  is monic in $y$; then, we have $ \Res(N(P), Q)= {n_0}^{d_Q}
  \Res(C,Q)$. Defining as above $D = (Q\remy
  C) + C$, we deduce $\Res(N(P),Q) = {n_0}^{d_Q}\Res(C, D)$, and finally
  $$\Res(P,Q) = (-1)^{d_c d_Q} (c_0n_0)^{d_Q}\Res(A,B)\Res(C,D).$$
  $A$ is monic of degree at most $d$, since $\Lc(P)$,
  has degree at most $d$. The remainder $R$
  has degree less than the degree of $A$, so that $B=R+A$ is monic of the same
  degree as $A$.  The same holds for $C$ and $D$.

  In terms of complexity, after computing $\Lc(P)$ and $N(P)$, all
  other operations are $\O(d)$ inversions or multiplications of power
  series in $\K[x]/\langle x^k\rangle$, and $\O(1)$ Euclidean divisions by monic
  polynomials of degree at most $d$ in $\R$. Their total cost is 
 $\O(\sfM(dk))$ operations in $\K$.
\end{proof}


\section{Computing pseudo inverses}\label{sec:pseudoinv}

In this section, we show that given two polynomials $P,Q$ in $\R$, we
can compute a matrix $\mat{M}=\left (\begin{smallmatrix}U & V \\ X &Y
\end{smallmatrix} \right )\in \scrM_2(\R)$ such that $UP+VQ$ is of the form $x^t \bmod x^{t+1}$,
for some integer $t$; we call $U$ and $V$ {\em pseudo-inverses} of $P$
and $Q$.

To simplify notation, for $\s=(P,Q)$ and $\mat{M}$ as above, 
we simply write $\mat{M} \cdot s$ for the matrix-vector product
$\mat{M} \left (\begin{smallmatrix} P \\ Q 
\end{smallmatrix}\right )$.


\subsection{The pseudo-division operator $\Q$}\label{ssec:Q}

In this
subsection, we define an operator $\Q: \Seq_k \to \Seq_k$ and study
its properties. For $\s=(0,0)$, we define $\Q(\s)=\Id$; otherwise, the
construction involves three stages.

For a non-zero pair of polynomials $\s=(P,Q)$ in $\Seq_k$, define
$$\eta(s) := \v(Q \remy N(P)) - \v(P).$$ This is well-defined, as
 $N(P)$ has the same valuation as $P$. 
\begin{Lemma}\label{lem:computeEta}
  For an integer $n$ and $\s$ in $\Seq_k$, given $\Pi_n(\s)$, we can
  compute $\min(\eta(s),n)$ using $\O(\sfM(dn))$ operations in $\K$.
\end{Lemma}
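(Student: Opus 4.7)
The plan is to compute $R := Q \remy N(P)$ modulo $x^{\v(P)+n}$ from $\Pi_n(s)$; the value $\min(\eta(s),n)$ is then read off as $\v(R) - \v(P)$ if $R \not\equiv 0$ to that precision, and as $n$ otherwise.

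Set $w := \v(P)$. First apply Lemma~\ref{lem:normalization} to $\Pi_n(P)$ to obtain $N_n(P) = \Pi_n(N(P))$ in $\O(\sfM(dn))$ operations. Since $N(P)$ is normal of valuation $w$, its leading coefficient in $y$ has the form $cx^w$ with $c \in \K^*$, so $N(P) = c x^w M$ with $M$ monic in $y$ of degree at most $d$ and of valuation $0$; the truncation of $M$ modulo $x^n$ is read off from $N_n(P)$. Likewise $Q = x^w \hat Q$ with $\hat Q$ a polynomial in $y$ over $\K[x]/\langle x^{k-w}\rangle$, known modulo $x^n$ from $\Pi_n(Q)$.

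Rewriting the Euclidean division $Q = UN(P) + R$ as $x^w \hat Q = (Uc x^w) M + R$ and dividing by $x^w$ (legal since $R$ inherits valuation at least $w$ from $Q$ and $UN(P)$), we obtain $\hat Q = (Uc) M + \hat R$ with $\hat R = R/x^w$ still of degree less than $d$ in $y$; equivalently, $\hat R$ is the Euclidean remainder of $\hat Q$ by the monic polynomial $M$. Computing $\hat R$ modulo $x^n$ by fast Euclidean division of polynomials in $y$ over $\K[x]/\langle x^n\rangle$ (Kronecker substitution plus fast multiplication in $\sfM$) costs $\O(\sfM(dn))$, and $R \equiv x^w \hat R$ modulo $x^{w+n}$ gives exactly the valuation information we need. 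The main subtle point is the precision accounting through the factorization $N(P) = cx^w M$, which turns a division by the non-monic (but normal) divisor $N(P)$ into a standard monic division modulo $x^n$, with no $x$-adic precision loss; everything else is a direct invocation of the two standard subroutines, whose costs add to $\O(\sfM(dn))$.
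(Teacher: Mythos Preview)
Your proof is correct and follows essentially the same approach as the paper's: reduce to valuation zero by dividing out $x^{\v(P)}$, then perform a Euclidean division modulo $x^n$ by a polynomial with unit leading coefficient, and read off the valuation of the remainder. The only cosmetic difference is that the paper divides $P$ and $Q$ by $x^{\v(P)}$ first and then normalizes, whereas you normalize first and then divide by $c\,x^{\v(P)}$ to obtain a monic divisor; these are the same computation up to an ordering of steps and a harmless nonzero scalar.
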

\begin{proof}
  We start by dividing both $P$ and $Q$ by $x^{\v(P)}$; this does not
  involve any arithmetic operation. We can then compute the
  normalization $N_n(P)/x^{\v(P)}$, and do the Euclidean division of
  $Q/x^{\v(P)}$ by this polynomial, using coefficients taken modulo
  $x^n$, both in time $\O(\sfM(dn))$. The valuation of the remainder
  is precisely $\min(\eta(s),n)$.
\end{proof}
The next lemma shows a more intrinsic characterization of~$\eta$. We
denote by $\sigma_0 : \R \rightarrow \K[y]$ the evaluation morphism that
sends $x$ to $0$.

\vspace{-2mm}

\begin{Lemma}\label{lem:Euclidean0}
For any integer $t \ge 0$, with $J_t(\s)=\left<P, Q\right>:x^{\v(\s)+t}$ and
$I_t(\s) = \sigma_0\left(J_t(\s)\right)$, we have
$$\begin{cases}
  I_t(\s) = I_0(\s)\text{ if } 0 \leq t < \eta(\s)\\
  I_t(\s) \varsupsetneq I_0(\s)\text{ if } t\geq \eta(\s).
\end{cases}$$
\end{Lemma}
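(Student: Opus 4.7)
The plan is to normalize, give an explicit description of $I_t(\s)$, and split according to whether $t<\eta(\s)$ or $t\ge\eta(\s)$.

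Since $P=\Lc(P)N(P)$ with $\Lc(P)$ a unit by Lemma~\ref{lem:normalization}, the ideals $\langle P,Q\rangle$ and $\langle N(P),Q\rangle$ coincide; and $\eta(\s)$ is itself defined via $N(P)$. So $J_t(\s)$ and $I_t(\s)$ are unaffected if we replace $P$ by $N(P)$, and I may assume $P$ is normal. Set $v=\v(\s)=\v(P)$ and $L=\Xv{P}\in\K[y]$. By normality, $\deg_y L=\deg_y P>\deg_y\Xj{P}$ for all $j>v$.

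I would first establish the characterization: $\phi\in I_t(\s)$ iff there exist $a,b\in\R$ with $aP+bQ\equiv x^{v+t}\phi\pmod{x^{v+t+1}}$. One direction splits $f\in J_t(\s)$ as $\phi+xg$ with $\phi=\sigma_0(f)$, yielding $x^{v+t}f=x^{v+t}\phi+x^{v+t+1}g$; the other reverses the construction. Three consequences follow: (i) $L\cdot\K[y]\subseteq I_t(\s)$ for every $t\ge 0$, witnessed for $\phi=cL$ by $(a,b)=(cx^t,0)$; (ii) $I_0(\s)=L\cdot\K[y]$, because $\v(Q)>v$ kills the $bQ$ contribution at the $x^v$ level and forces $\phi=\Xz{a}\cdot L$; and (iii) the chain $J_t(\s)\subseteq J_{t+1}(\s)$ (multiply by $x$) produces a non-decreasing sequence $(I_t(\s))_t$.

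For $0\le t<\eta(\s)$: let $R=Q\remy P$ and $u=Q\quoy P$, so that $Q=uP+R$ and $\v(R)=v+\eta(\s)>v+t$. Any relation $aP+bQ\equiv x^{v+t}\phi\pmod{x^{v+t+1}}$ rewrites as $a'P+bR\equiv x^{v+t}\phi\pmod{x^{v+t+1}}$ with $a'=a+bu$; since $bR\in x^{v+\eta(\s)}\R\subseteq x^{v+t+1}\R$, it collapses to $a'P\equiv x^{v+t}\phi\pmod{x^{v+t+1}}$. Because $L\ne 0$ in the integral domain $\K[y]$, multiplication by $P$ preserves $x$-valuations: $\v(a'P)=v+\v(a')$ whenever this is less than $k$. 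Matching with the right-hand side forces $\v(a')\ge t$, and then the $x^{v+t}$ coefficient of $a'P$ is $[x^t]a'\cdot L=\phi$ (or $\phi=0$ if $\v(a')>t$). So $\phi\in L\cdot\K[y]$, and together with (i) this yields $I_t(\s)=L\cdot\K[y]=I_0(\s)$.

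For $t=\eta(\s)$: taking $(a,b)=(-u,1)$ gives $aP+bQ=R$, whose leading $x$-coefficient $L_R:=\Xv{R}\in\K[y]$ is non-zero, so $L_R\in I_{\eta(\s)}(\s)$. Euclidean division gives $\deg_y L_R\le\deg_y R<\deg_y P=\deg_y L$, so $L_R\notin L\cdot\K[y]=I_0(\s)$; hence $I_{\eta(\s)}(\s)\varsupsetneq I_0(\s)$, and monotonicity (iii) propagates the strict containment to every $t\ge\eta(\s)$. The main obstacle is the case $t<\eta(\s)$: the reduction modulo $R$ is straightforward, but rigorously forcing $\v(a')\ge t$ hinges on combining the normality of $P$ with the integrality of $\K[y]$, in order to preclude hidden cancellations at the intermediate $x$-powers.
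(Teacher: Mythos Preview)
Your proof is correct and follows essentially the same strategy as the paper: reduce $Q$ modulo $N(P)$, use that the remainder has valuation $\v(\s)+\eta(\s)$ to show it cannot contribute for $t<\eta(\s)$, and exhibit its leading $x$-coefficient as a new element of $I_{\eta(\s)}(\s)$. Your version is more explicit than the paper's---you normalize $P$ up front, give the clean membership criterion for $I_t(\s)$, and identify $I_0(\s)=L\cdot\K[y]$ concretely---while the paper argues the two inclusions more tersely without naming $I_0(\s)$; but the underlying mechanism is the same.
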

\begin{proof}
For $t\geq\v(\s)$, we have $I_0(\s)\subset I_t(\s)$. Then, let $Q' =
Q\remy N(P)$. If
$0\leq t<\eta(\s)$ let $W$ be such that $x^{\v(s)+t}W= UP+VQ' \in
\langle P,Q\rangle$. Since $x^{\v(\s)+\eta(\s)}$ divides $Q'$
and $t+1\leq \eta(\s)$ we have $x^{\v(s)+t}W = UP \mod x^{\v(s)+t+1}$,
so $x^{t}$ divides $U$ and $x^{\v(s)}W = U'P \mod x^{\v(s)+1}$.
Thus $I_t(\s) \subset I_0(\s)$ and $I_t(\s) = I_0(\s)$.
If $t\ge\eta(\s)$, $I_t(\s)$ contains the residue class of the remainder
of $\Pi_1(Q)/x^{\v(Q)}$ by $\Pi_1(P)/x^{\v(P)}$, that is
a non-zero polynomial of degree less than $\deg_y(\Pi_1(P))$, and is
not included in $I_0(\s)$.
\end{proof}

For $\s=(P,Q)$ as above, perform a Euclidean division of
$Q$ by the normal polynomial $N_{\eta(\s)}(P)$, and define the matrix
$$
\mathbf D_{\s} := \begin{pmatrix}
                    x^{\eta(\s)} & 0\\
         - \left(Q \mydivy N_{\eta(\s)}(P)\right) \remx x^{\eta(\s)}
                    & \Lc_{\eta(s)}(P)
                   \end{pmatrix}.$$
Then, the polynomial $\tilde Q$ defined by
$$ \mathbf D_{\s} \cdot \begin{pmatrix} P\\ Q\end{pmatrix}
  = \begin{pmatrix} x^{\eta(s)} P\\ \tilde Q \end{pmatrix}
$$ has valuation $\v(P)+\eta(s)$ and we have the inequality \sloppy
  $\deg_y(\Pi_1(\tilde Q))<\deg_y(\Pi_1(P))$.

Given $P,Q\in \K[y]$ of degree at most $d$, and denoting by $G$ their
monic gcd, there exists an invertible matrix $\mathbf G_{(P,Q)}$ with
entries in $\K[y]$ of degree less than $d$ such that
$$
\mathbf G_{(P,Q)}\cdot\begin{pmatrix} P\\ Q \end{pmatrix}
=
\begin{pmatrix} G \\ 0 \end{pmatrix}.
$$
If $Q=0$ then $G_{(P,Q)}$ is the identity matrix.
By extension, for $s=(P,Q)\in \R^2$ with $\v(P)=\v(Q)$, we define
$\mathbf G_{s}$ as $\mathbf G_{(\Pi_1(P)/x^{\v(P)},\Pi_1(Q)/x^{\v(Q)})}$. By construction,
the entries $(\tilde P,\tilde Q)$ of $\mathbf G_{s}\cdot s$ 
satisfy $\v(\tilde P) < \v(\tilde Q)$, or $\tilde P = \tilde Q = 0$; in other words,
the pair $\mathbf G_{s}\cdot s$ belongs to $\Seq_k$.

For a pair $s=(P,Q)$ in $\Seq_k$ with $P \neq 0$ we define
$\mathbf N_s$ by
$$\mathbf N_s = \begin{pmatrix}
  \Lc(P)^{-1} & 0 \\
  - \Lc(P)^{-1}(Q \mydivy N(P)) & 1
\end{pmatrix}.$$
If $P=0$, set $\mathbf N_s=\Id$. Else, $v(Q)>v(P)$ and
there exists a matrix $\mathbf R$ such that $\mathbf N_s =
\Id+x\mathbf R$; then, $\mathbf N_s$ is invertible.
\begin{Lemma}
    \label{lem:Euclidean}
    For $\s = (P,Q) \in \Seq_k^2$, define
    $$\Q(\s) := \mathbf N_{\mathbf G_{\mathbf D_\s\cdot\s}
                        \cdot\mathbf D_\s \cdot \s}
             \cdot \mathbf G_{\mathbf D_\s  \cdot \s } \cdot \mathbf D_\s.$$
    Then, 
    \begin{enumerate}
    \item $\Q(\s) \cdot \s$ is in $\Seq_k$;
    \item the following equality between ideals holds:
      $$\left< \Q(\s) \cdot \s \right> = \left< \s \right> \cap \left<
      x^{\v(\s)+\eta(\s)} \right>;$$
    \item  $\Pi_1(\Q(\s)\cdot \s) = \left(\begin{smallmatrix}
     x^{v(\s)+\eta(\s)} G\\0\end{smallmatrix}\right)$, where $G$ is the gcd of $\Pi_1(P)/x^{v(\s)}$ and $\Pi_1(\tilde Q)/x^{v(\s)+\eta(\s)}$;
    \item $\Q(\s)\cdot \s$ and $\Q(\s')\cdot\s'$ generate the same
      ideal, for any pair $\s'$ that generates the same ideal as $\s$;
    \item the entries of $\Q(s)\cdot \s$ have degree less than
      $\deg_y(\Pi_1(P))$.
    \end{enumerate}
    For $\s=(P,Q) \in\Seq_k$, where $P$ and $Q$ have degree
    at most $d$, given $\Pi_{n}(\s)$ for some $n \ge \eta(\s)$,
    $\Pi_{n-\eta(\s)}(\Q(\s)\cdot\s)$ can be computed using
    $\O(\sfM(dn)+\sfM(d)\log(d))$ operations in $\K$.
\end{Lemma}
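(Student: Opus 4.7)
The plan is to unwind the definition of $\Q(\s)$ as a composition of three stages---first $\mathbf D_\s$, then $\mathbf G_{\mathbf D_\s \cdot \s}$, then $\mathbf N_{\mathbf G_{\mathbf D_\s \cdot \s} \cdot \mathbf D_\s \cdot \s}$---and to verify the five structural claims (plus the cost bound) by tracking what each stage does at the level of ideals, valuations, and $y$-degrees.

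After the first stage, the pair $\mathbf D_\s \cdot \s$ equals $(x^{\eta(\s)} P,\tilde Q)$, with both entries of valuation $\v(\s)+\eta(\s)$ by the very definition of $\eta(\s)$, and with $\deg_y(\Pi_1(\tilde Q))<\deg_y(\Pi_1(P))$ as already noted just before the lemma. The second stage applies $\mathbf G$ to extract the monic $y$-gcd of the leading $x$-coefficients, producing a pair whose first entry is $x^{\v(\s)+\eta(\s)} G$ modulo $x^{\v(\s)+\eta(\s)+1}$ and whose second entry either vanishes or has strictly larger valuation; by construction this pair lies in $\Seq_k$, which gives claim~(1) and claim~(3). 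The third stage $\mathbf N$ has the form $\Id+x\mathbf R$, so it neither alters $\Pi_1$ of the pair nor removes it from $\Seq_k$, but it enforces the degree and normalization bounds needed for claim~(5), using the degree bounds on $\mathbf G_{(P,Q)}$ recalled in the paragraph preceding the lemma.

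The heart of the argument is claim~(2). The inclusion $\langle \Q(\s)\cdot\s\rangle \subseteq \langle \s\rangle \cap \langle x^{\v(\s)+\eta(\s)}\rangle$ is immediate because $\mathbf G$ and $\mathbf N$ are invertible in $\scrM_2(\R)$ (so they preserve ideals) while both components of $\mathbf D_\s \cdot \s$ lie in $\langle \s \rangle$ and are divisible by $x^{\v(\s)+\eta(\s)}$. For the reverse inclusion I would argue directly: using the Euclidean division of the $Q$-coefficient by $N_{\eta(\s)}(P)$ together with the fact that $x^{\v(\s)+\eta(\s)}$ divides $\Lc_{\eta(\s)}(P)\cdot (Q\remy N(P))$, any $z=aP+bQ$ divisible by $x^{\v(\s)+\eta(\s)}$ rewrites as an $\R$-combination of $x^{\eta(\s)} P$ and $\tilde Q$; this is essentially the mechanism already used in the proof of Lemma~\ref{lem:Euclidean0}. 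Claim~(4) then follows: $\v(\s)$ is the minimum valuation of any element of $\langle\s\rangle$ and $\eta(\s)$ admits the intrinsic characterization given by Lemma~\ref{lem:Euclidean0}, so both depend only on $\langle\s\rangle$, hence so does the right-hand side of~(2).

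For the complexity I would price each stage at the relevant $x$-adic precision. Computing $\min(\eta(\s),n)=\eta(\s)$ costs $\O(\sfM(dn))$ by Lemma~\ref{lem:computeEta}; assembling $\mathbf D_\s$ and applying it needs $N_{\eta(\s)}(P)$, $\Lc_{\eta(\s)}(P)$, and the Euclidean division of $Q$ by $N_{\eta(\s)}(P)$, each in $\O(\sfM(dn))$ via Lemma~\ref{lem:normalization}; the second stage reduces to a univariate $y$-gcd between two polynomials in $\K[y]$ of degree at most $d$, which by the half-gcd algorithm costs $\O(\sfM(d)\log(d))$; and the third stage is another normalization and Euclidean division in $\O(\sfM(dn))$. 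The principal obstacle will be the precision bookkeeping: after $\mathbf D_\s$ the usable precision above the new valuation $\v(\s)+\eta(\s)$ drops to $n-\eta(\s)$, so the subsequent matrix-vector products and normalizations must be performed at this truncated precision in order to obtain $\Pi_{n-\eta(\s)}(\Q(\s)\cdot\s)$ within the claimed budget rather than at the ambient precision $n$.
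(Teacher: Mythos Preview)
Your proposal is correct and follows essentially the same approach as the paper: reduce claim~(2) to the equality $\langle \mathbf D_\s\cdot\s\rangle=\langle\s\rangle\cap\langle x^{\v(\s)+\eta(\s)}\rangle$ using invertibility of $\mathbf G$ and $\mathbf N$, prove the reverse inclusion via the rewriting $\langle P,Q\rangle=\langle P,\tilde Q\rangle$ and a divisibility argument on the $P$-coefficient, deduce (4) from Lemma~\ref{lem:Euclidean0}, and price the stages exactly as you indicate. One small point on claim~(5): the paper's argument does not use the degree bounds on the entries of $\mathbf G_{(P,Q)}$ but rather observes directly that after applying $\mathbf N$, the first entry is normal of $y$-degree $\deg_y(G)<\deg_y(\Pi_1(P))$ (since $G$ divides $\Pi_1(\tilde Q)/x^{\v(\s)+\eta(\s)}$) and the second entry is a remainder modulo the first; your wording is slightly imprecise here but the conclusion is the same.
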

\begin{proof}
  We saw just above the lemma that $\mathbf G_{\mathbf D_\s\cdot\s}
  \cdot (\mathbf D_\s \cdot \s)$ belongs to $\Seq_k$. Since applying
  the matrix $\mathbf N_{\mathbf G_{\mathbf D_\s\cdot\s} \cdot \mathbf
    D_\s \cdot \s }$ to this vector does not change the valuations of
  its entries, we deduce that $\Q(\s) \cdot \s$ is in $\Seq_k$.

  In order to prove the relation $\langle \Q(\s) \cdot \s
  \rangle = \langle \s \rangle \cap \langle x^{\v(\s)+\eta(\s)} \rangle$, it
  is sufficient to prove that $\langle \mathbf D_\s \cdot \s\rangle =
  \langle \s \rangle \cap \langle x^{\v(\s)+\eta(\s)} \rangle$, since the
  matrices $\mathbf N_{\mathbf G_{\mathbf D_\s\cdot\s} \cdot \mathbf
    D_\s \cdot \s }$ and $\mathbf G_{\mathbf D_\s\cdot \s}$ are
  invertible over $\R$. 
  The two polynomials in the vector $\mathbf
  D_\s \cdot \s=(x^{\eta(\s)}P, \tilde Q)$ are divisible by
  $x^{\v(P)+\eta(\s)}$, thus $\langle \mathbf D_\s \cdot \s\rangle \subset
  \langle \s \rangle \cap \langle x^{\v(P)+\eta(\s)} \rangle$.
  For the other inclusion, let $W$ be a polynomial in $ \langle \s
  \rangle \cap \langle x^{\v(P)+\eta(\s)} \rangle$. Since we have $s =
  \langle P,Q\rangle = \langle P, \tilde Q\rangle$, we can write $W= U
  P + V \tilde Q$, for some $U,V$ in $\R$. On the other hand, we know
  that $x^{\v(P)+\eta(\s)}$ divides $W$, and since it also divides
  $\tilde Q$, it divides $UP$. This implies that $x^{\eta(\s)}$
  divides $U$, which means that $W$ is in $\langle
  \mathbf D_\s \cdot \s \rangle$. This allows us to conclude for the
  equality of ideals, noting that $\v(\s)=\v(P)$.

  The third point comes from the fact that the matrix
  $\mathbf N_{\mathbf G_{\mathbf D_\s\cdot\s} \cdot\mathbf D_\s \cdot
  \s}$ can be written $\Id + x\mathbf R$, so that
  $\Pi_1(\Q(\s)\cdot\s)
  = \Pi_1(\mathbf G_{\mathbf D_\s  \cdot \s } \cdot \mathbf D_\s \cdot
  \s)
  = \left(\begin{smallmatrix} x^{v(\s)+\eta(\s)} G\\0\end{smallmatrix}\right)$.
    
    If $\s$ and $\s'$ generate the same ideal,
    $\eta(\s)=\eta(\s')$ (Lemma~\ref{lem:Euclidean0}); using the second 
    item, this proves point 4.


    The degree property follows from the fact that in the pair
    $(A,B)=\Q(\s)\cdot\s$, $A$ is normal of degree $\deg_y(G) <
    \deg_y(\Pi_1(P))$, and $B$ is a remainder modulo $N(A)$.

Computing $\mathbf
  D_\s$ can be done in time $\O(\sfM(d\eta(\s)))$ using
  Lemma~\ref{lem:normalization} to compute $N_{\eta(\s)}(P)$ and
  $\Lc_{\eta(s)}(P)$ in time $\O(\sfM(d\eta(\s)))$. The matrix-vector
  product that gives $\tilde Q$ is done in degree $n$ in $x$ and $d$
  in $y$, in time $\O(\sfM(d n))$. Then, we saw that $\Pi_1(\tilde Q)$
  has degree less than $d$, so  computing $\mathbf G_{\mathbf D_\s
    \cdot\s}$ is an extended gcd calculation in $\K[y]$ that can be
  done in time $\O(\sfM(d)\log( d)$.
  Applying $\mathbf G_{\mathbf D_\s \cdot\s}$ to ${\mathbf D_\s
  \cdot\s}$ takes $\O(n\sfM(d))$, and results in a matrix
  of degree $\O(d)$ in $y$. Finally, applying $\mathbf N_{\mathbf
    G_{\mathbf D_\s \cdot\s} \cdot \mathbf D_\s \cdot\s}$ to $\mathbf
  G_{\mathbf D_\s \cdot\s} \cdot \mathbf D_\s \cdot\s$ is again a
  normalization at precision $n$ along with arithmetic operations that
  can all be done in time $\O(\sfM(dn))$.
\end{proof}


\subsection{An extension of the half-gcd}

Our goal is now to iterate the pseudo-division $\Q(s)$ until we reach
an integer $t$ such that $\langle s\rangle\cap\langle x^t\rangle =
\langle x^t\rangle$. We use a divide-and-conquer algorithm, inspired
by the half-gcd algorithm. If we applied this idea directly, the increase in
degree in $y$ of the transition matrices would prevent us from getting
a softly linear bound in the degree of the input; we will thus work
modulo an equivalence relation, to control the size of the
intermediate polynomials.

Consider the following equivalence relation on $\Seq_k$: for any two
pairs $(P,Q)$ and $(P',Q')$ in $\Seq_k$, we say that $(P,Q) \sim
(P',Q')$ if and only if the ideals they generate are the same.  In
particular, this implies that $\v(P,Q)=\v(P',Q')$ and that $\Pi_1(P)$
and $\Pi_1(P')$ are equal up to a constant.

Let further $\H, \H'$ be two functions from $\Seq_k$ to
$\scrM_2(\R)$. Extending the equivalence property to the set of
functions, we say that $\H$ and $\H'$ are equivalent if for all $\s\in
\Seq_k$ we have $\H(\s) \cdot \s \sim \H'(\s) \cdot \s$. We still
write in this case $\H \sim \H'$.

\begin{Definition}[Euclidean function]
We say that $\H:\Seq_k\rightarrow\scrM_2(\R)$ is \emph{Euclidean} if:
\begin{itemize}
  \item for all $\s,\s' \in \Seq_k$, if $\s \sim \s'$, then $\H(\s)
    \cdot \s \sim \H(\s') \cdot \s'$
  \item for all $\s \in \Seq_k$, if $\s$ is non-zero, $\v(\H(\s) \cdot \s) > \v(\s)$.
\end{itemize}
We denote $\v(\H(\s).\s)-\v(\s)$ by $\eta_\H(\s)$,
and we say that
$\H$ is \emph{online} if for all $\s \in \Seq_k$:
\begin{itemize}
  \item $\eta_{\H}(\Pi_i(\s)) \geq i$ for all $i \leq \eta_{\H}(\s)$
  \item $\H(\s) \cdot \s
    \sim \H(\Pi_{i}(\s)) \cdot \s$ for all $i \geq \eta_{\H}(\s)+1$.
\end{itemize}
\end{Definition}

\begin{Lemma}\label{lem:QisEuclidean}
  The function $\Q$ introduced in Lemma~\ref{lem:Euclidean} is an
  online Euclidean function.
\end{Lemma}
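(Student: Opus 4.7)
The lemma amounts to four axioms: the two Euclidean conditions (preservation of $\sim$ and strict valuation increase) and the two online conditions (truncation lower bound $\eta_\Q(\Pi_i(s)) \geq i$ for $i \leq \eta_\Q(s)$, and stabilization $\Q(s)\cdot s \sim \Q(\Pi_i(s))\cdot s$ for $i \geq \eta_\Q(s)+1$). The plan is to deduce each from the structural properties already proven in Lemma~\ref{lem:Euclidean}, combined with one elementary truncation principle for normalization and Euclidean division.

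The first Euclidean condition is exactly item~4 of Lemma~\ref{lem:Euclidean}. For the valuation increase, item~3 of that lemma gives $\Pi_1(\Q(s)\cdot s) = (x^{\v(s)+\eta(s)}G,0)$ with $G$ non-zero, so $\v(\Q(s)\cdot s) = \v(s)+\eta(s)$ and in particular $\eta_\Q(s) = \eta(s)$. It remains to show $\eta(s)\geq 1$ for non-zero $s=(P,Q) \in \Seq_k$. Reducing the identity $Q = U\,N(P) + R$ modulo $x^{\v(P)+1}$ and comparing coefficients of $x^{\v(P)}$, normality of $N(P)$ combined with the degree bound $\deg_y R < \deg_y N(P)$ forces both $U$ and $R$ to have valuation at least $\v(P)+1$.

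For the online axioms the key input is the following truncation principle, a direct consequence of the uniqueness statements in Lemma~\ref{lem:normalization} and in Euclidean division: writing $R = Q \remy N(P)$ and letting $R'$ be the analogous remainder computed from $\Pi_i(s)$, one has $R' \equiv R \pmod{x^{\v(P)+i}}$. Applied when $\eta(s) \geq i$, this gives $R \equiv 0 \pmod{x^{\v(P)+i}}$, whence $R' \equiv 0$ modulo the same power of $x$; this is exactly $\eta(\Pi_i(s)) \geq i$, the first online axiom.

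For the stabilization $\Q(s)\cdot s \sim \Q(\Pi_i(s))\cdot s$ with $i \geq \eta(s)+1$, the same principle first yields $\eta(\Pi_i(s)) = \eta(s)$; inspecting the formulas then shows $\mathbf D_{\Pi_i(s)} = \mathbf D_s$, because each entry depends only on $\Pi_{\eta(s)}(s)$. The matrix $\mathbf G_{\mathbf D_{\Pi_i(s)}\cdot \Pi_i(s)}$ is determined by $\Pi_1$ of its argument, which in turn depends on $s$ only modulo $x^{\v(s)+\eta(s)+1}$; since $i \geq \eta(s)+1$, this data is contained in $\Pi_i(s)$ and agrees with the corresponding data of $s$, so $\mathbf G_{\mathbf D_{\Pi_i(s)}\cdot \Pi_i(s)} = \mathbf G_{\mathbf D_s\cdot s}$. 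The $\mathbf N$-factors in $\Q(s)$ and $\Q(\Pi_i(s))$ may differ, but each is of the form $\Id + x\mathbf R$ and hence invertible over $\R$. Consequently $\Q(s)\cdot s$ and $\Q(\Pi_i(s))\cdot s$ are both obtained by left-multiplying the common vector $\mathbf G_{\mathbf D_s\cdot s}\cdot\mathbf D_s\cdot s$ by an invertible matrix, so they generate the same ideal $\langle s\rangle\cap\langle x^{\v(s)+\eta(s)}\rangle$ supplied by Lemma~\ref{lem:Euclidean}; and the shape $\Id + x\mathbf R$ ensures that $\Q(\Pi_i(s))\cdot s$ still lies in $\Seq_k$, so the relation $\sim$ applies. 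The most delicate step is the precision book-keeping for $\mathbf D$ and $\mathbf G$; once it is verified carefully that each factor stabilizes at the claimed precision, the remaining arguments are formal.
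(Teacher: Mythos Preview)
Your proof is correct and follows essentially the same route as the paper: the Euclidean property from items~3--4 of Lemma~\ref{lem:Euclidean}, and the online property by verifying that $\mathbf D_s$ and $\mathbf G_{\mathbf D_s\cdot s}$ stabilize once $i\ge\eta(s)+1$ while the $\mathbf N$-factor changes only by an invertible matrix of the form $\Id+x\mathbf R$. The one place where the paper is cleaner is the valuation-increase step: rather than examining the coefficient of $x^{\v(P)}$ (where, incidentally, your claim that $\v(U)\ge\v(P)+1$ is a slip---the argument only yields $\v(U)\ge1$), it simply invokes $\v(Q\remy N(P))\ge\v(Q)>\v(P)$.
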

\begin{proof}
  The first point was proved in Lemma~\ref{lem:Euclidean}.
  Now, let $\s = (P,Q)$ be a non-zero element of $\Seq_k$. By
  construction, we have $\v(\Q(\s)\cdot\s) = \v(\s) + \eta(\s)$. In
  particular, $\eta(\s) = \v(Q \remy N(P)) - \v(P) \geq \v(Q) - \v(P)
  > 0$. Thus, $\Q$ is a Euclidean function.

  To prove that $\Q$ is online,  notice that $\Q(\s \remx
  x^{\v(\s) + i}) \remx x^i=$ $ \Q(\s) \remx x^i$. In particular for
  $i\leq\eta_\Q(\s)$,  $\Q(\Pi_i(\s)) \cdot \Pi_i(\s)$ is 
given by
  \begin{align*}
    &  \left(\Q(\s)\remx x^i\right) \cdot
      \left(\s \remx x^{\v(\s)+i}\right) \bmod x^{\v(s)+i}\\
    & = \Q(\s) \cdot \s \mod x^{\v(\s)+i} = (0,0) \mod x^{\v(\s)+i},
  \end{align*}
  so that $\eta_{\Q}(\Pi_i(\s))\geq i$.

  If $i>\eta(\s)$, we  prove that there exists a matrix $\mathbf R$
  such that $\Q(\Pi_i(\s)) = (\Id+x\mathbf R) \cdot \Q(\s)$. Indeed, if
  that holds,  $(\Id+x\mathbf R)$ is an invertible
  matrix over $\R$, so that $\left<\Q(\Pi_i(\s))\cdot\s\right> =
  \left<\Q(\s)\cdot\s\right>$.  Moreover $\Pi_1(\Q(\Pi_i(\s))\cdot\s)
  = \Pi_1(\Q(\s)\cdot\s)$, which will be sufficient to conclude. Let
  $\mathbf M = \mathbf G_{\mathbf D_{\s}\cdot\s}\cdot \mathbf D_{\s}$.
  If $i>\eta(\s)$, we note by construction that $\mathbf M \remx x^i =
  \mathbf M$.  Moreover there exists a matrix $\mathbf R$ such that
  $\mathbf N_{\mathbf M\cdot\s} \remx x^i = (\Id+x\mathbf R)\cdot
  \mathbf N_{\mathbf M\cdot\s}$, so that $\Q(\Pi_i(\s)) =
  (\Id+x\mathbf R) \cdot \Q(\s)$.
\end{proof}
Let $\s$ be an element of $\Seq_k$ and define by recurrence the
following sequence of elements of $\scrM_2(\R)$:
\begin{equation}
    \label{eq:recursivesequence}
\begin{cases}
    \mathbf Q_0     &=\quad \Id\\
    \mathbf Q_{n+1} &=\quad \Q(\mathbf Q_n \cdot \s) \cdot \mathbf Q_n.
\end{cases}
\end{equation}
For $\s$ in $\Seq_k$, the sequence $(\v(\mathbf Q_i \cdot \s))_{i\in\mathbb N}$ is
increasing, until it reaches $\v(\mathbf Q_i \cdot \s)=k$. Thus
given an integer $n$, we can define the function $\Q_n$ from $\Seq_k$ to
$\scrM_2(\R)$ by
\begin{align*}
\Q_n: \Seq_k & \rightarrow \scrM_2(\R)\\
       \s   & \mapsto \mathbf Q_{i_0},\ i_0=\max\{i\in\mathbb N \mid \v(\mathbf Q_i \cdot \s)-\v(\s) \leq n\}
\end{align*}
for $\s$ non-zero; for $\s=(0,0)$, we set $\Q_n(\s)=\Id$.  In
particular, for any $\s$, $\Q_0(\s)=\Id$ (since for $\s$ non-zero,
$\mathbf Q_1 \cdot \s =\Q(\s)\cdot s$ has valuation greater than that
of $\s$).

\begin{Lemma}
  \label{lem:Qn}
  For $n \ge 0$, if $\s\sim\s'$ in $\Seq_k$ then $\Q_n(\s)\cdot\s \sim
  \Q_n(\s')\cdot\s'$. Moreover, let $j$ be the minimal integer such that
  $I_j(\s) = I_n(\s)$; then, $\langle\Q_n(\s)\cdot \s\rangle =
  \langle\s\rangle \cap \langle x^{\v(\s)+j}\rangle$.
\end{Lemma}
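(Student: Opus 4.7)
The plan is to unfold the recursive definition of $\mathbf{Q}_i$ and combine the ideal identity from item~2 of Lemma~\ref{lem:Euclidean} with the qualitative description of $I_t(\s)$ from Lemma~\ref{lem:Euclidean0}.

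For the equivalence claim, I would argue by induction on $i$ that $\mathbf{Q}_i\cdot\s \sim \mathbf{Q}_i\cdot\s'$ whenever $\s \sim \s'$. The base case is trivial, and the step uses the Euclidean property of $\Q$ (Lemma~\ref{lem:QisEuclidean}): from $\mathbf{Q}_i\cdot \s \sim \mathbf{Q}_i\cdot \s'$ we get $\Q(\mathbf{Q}_i\cdot\s)\cdot(\mathbf{Q}_i\cdot \s) \sim \Q(\mathbf{Q}_i\cdot\s')\cdot(\mathbf{Q}_i\cdot \s')$, which by definition is $\mathbf{Q}_{i+1}\cdot \s \sim \mathbf{Q}_{i+1}\cdot \s'$. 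Since $\v(\mathbf{Q}_i\cdot \s)$ depends only on the ideal $\langle\mathbf{Q}_i\cdot \s\rangle$, the whole valuation sequence, and hence the cutoff index $i_0$ used to define $\Q_n$, coincides for $\s$ and $\s'$; this gives $\Q_n(\s)\cdot \s \sim \Q_n(\s')\cdot \s'$.

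For the ideal identity, set $v_i := \v(\mathbf{Q}_i\cdot \s)$. I would first prove by induction on $i$ that $\langle\mathbf{Q}_i\cdot \s\rangle = \langle\s\rangle \cap \langle x^{v_i}\rangle$. The step uses item~2 of Lemma~\ref{lem:Euclidean} applied to $\mathbf{Q}_i\cdot \s$, which gives $\langle\mathbf{Q}_{i+1}\cdot \s\rangle = \langle\mathbf{Q}_i\cdot \s\rangle \cap \langle x^{v_{i+1}}\rangle$, and since $v_{i+1}\ge v_i$ the triple intersection collapses. In particular, $\langle\Q_n(\s)\cdot \s\rangle = \langle\s\rangle \cap \langle x^{v_{i_0}}\rangle$, so the remaining task is to identify $v_{i_0} - \v(\s)$ with $j$. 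For this I would establish the shift identity $I_{t'}(\mathbf{Q}_i\cdot \s) = I_{t' + v_i - \v(\s)}(\s)$, which reduces to the equality of colon ideals $\langle\mathbf{Q}_i\cdot \s\rangle : x^{v_i + t'} = \langle\s\rangle : x^{v_i + t'}$. The nontrivial inclusion $\supset$ goes as follows: any $f$ with $x^{v_i + t'}f \in \langle\s\rangle$ also satisfies $x^{v_i + t'}f \in \langle x^{v_i}\rangle$, hence $x^{v_i + t'}f \in \langle\s\rangle \cap \langle x^{v_i}\rangle = \langle\mathbf{Q}_i\cdot \s\rangle$, so $f$ lies in the colon ideal of $\langle\mathbf{Q}_i\cdot \s\rangle$ by $x^{v_i+t'}$.

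Feeding this shift identity into Lemma~\ref{lem:Euclidean0} applied to each $\mathbf{Q}_i\cdot \s$ shows that $I_t(\s)$ is constant on each half-open interval $[v_i - \v(\s),\, v_{i+1} - \v(\s))$ and strictly larger at the right endpoint. Since $i_0$ is defined so that $n$ lies in $[v_{i_0} - \v(\s),\, v_{i_0+1} - \v(\s))$, the minimal $j$ with $I_j(\s) = I_n(\s)$ is exactly $v_{i_0}-\v(\s)$, which concludes. The main technical point will be the shift identity, since colon ideals in $\R = \K[x,y]/\langle x^k\rangle$ must be handled carefully in the presence of nilpotents; the edge case where the recursion terminates before reaching $\v(\s)+n$ is handled by treating $v_{i_0+1}$ as $k$ in the analysis of the last interval.
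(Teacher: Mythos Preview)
Your argument is correct and uses the same ingredients as the paper (the Euclidean property of $\Q$, item~2 of Lemma~\ref{lem:Euclidean}, and Lemma~\ref{lem:Euclidean0}), but the organization differs. The paper inducts on the truncation level $n$: at each step it compares $\Q_n$ to $\Q_{n-1}$ and splits into two cases according to whether one further application of $\Q$ overshoots $\v(\s)+n$ or lands exactly there, treating the equivalence claim and the ideal claim simultaneously. You instead induct on the iteration index $i$, first proving $\mathbf{Q}_i(\s)\cdot\s\sim\mathbf{Q}_i(\s')\cdot\s'$ and $\langle\mathbf{Q}_i\cdot\s\rangle=\langle\s\rangle\cap\langle x^{v_i}\rangle$ for all $i$, and only at the end pick out the relevant $i_0$. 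Your explicit shift identity $I_{t'}(\mathbf{Q}_i\cdot\s)=I_{t'+v_i-\v(\s)}(\s)$ and the resulting staircase description of $t\mapsto I_t(\s)$ make the identification of $j$ with $v_{i_0}-\v(\s)$ transparent; the paper uses the same shift implicitly (e.g.\ the line ``$I_\ell(t)=I_n(\s)$'') but embeds it in the case analysis. Your route is a bit more modular and arguably cleaner; the paper's is slightly shorter since it avoids stating the shift identity separately. One minor point of care: your notation conflates the two sequences $\mathbf{Q}_i(\s)$ and $\mathbf{Q}_i(\s')$, and you implicitly use that $\v$ is invariant under $\sim$ (which the paper records right after defining~$\sim$); both are harmless once made explicit.
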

\begin{proof}
  We use a recurrence on $n$. For $n=0$, it is clear that
  $\Q_0$ satisfies the desired properties. Then, given $n \geq 1$ and
  $\s \sim \s'$ two equivalent elements of $\Seq_k$ we know by
  recurrence assumption that $t := \Q_{n-1}(\s)\cdot\s \sim t' :=
  \Q_{n-1}(\s')\cdot\s'$.

  If $\v(\Q(t) \cdot t) > \v(s) + n$, then $\v(\Q(t')\cdot t') > \v(s') +
  n$ and $\Q_n(\s) = \Q_{n-1}(\s)$ and $\Q_n(\s') = \Q_{n-1}(\s')$,
so that $\Q_{n}(\s)\cdot\s \sim \Q_{n}(\s')\cdot\s'$. Moreover, in
  this case, let $j$ be the minimal integer such that
  $I_j(\s) = I_{n-1}(\s)$. For all $\ell < \eta(t)$ we have $I_\ell(t)
  = I_0(t)$. In particular, $\ell := n - (\v(t)-\v(\s)) < \eta(t)$,
  and $I_\ell(t) = I_n(\s)$ and $I_0(t) = I_{n-1}(\s)$. Hence,
  $\langle\Q_n(\s)\cdot\s\rangle = \langle\s\rangle \cap
  \langle x^{\v(\s)+j}\rangle$ and $j$ is the smallest integer such that $I_n(\s)
    = I_{n-1}(\s) = I_j(\s)$. 

  Otherwise, $\v(\s) + n-1 < \v(\Q(t) \cdot t) \leq \v(s) + n$, so
  that $\v(\Q(t) \cdot t) = \v(\s)+n$. Then $\v(\Q(t')\cdot
  t') = \v(s') + n$. Thus $\Q_n(\s) = \Q(t) \cdot \Q_{n-1}(\s)$
  and $\Q_n(\s') = \Q(t') \cdot \Q_{n-1}(\s')$. Since $\Q$ is a
  Euclidean function, $\Q(t)\cdot t \sim \Q(t')\cdot t'$ and this leads
  to $\Q_n(\s)\cdot\s \sim \Q_n(\s')\cdot \s'$. Moreover, let $\ell$ be
  the smallest integer such that $I_\ell(t) \neq I_0(t)$. We have
  $\langle\Q(t)\cdot t\rangle = \langle t\rangle \cap \langle
  x^{\v(t)+\ell}\rangle$. According to Lemma~\ref{lem:Euclidean0} we have
  $\ell = \eta(t) = \v(\s) + n - \v(t)$. In particular $n = \v(t) + \ell
  -\v(\s)$, and we have $I_n(\s) = I_\ell(t)$ and for all $i<n$ we have
  $I_i(\s) \neq I_n(\s)$. Thus, $n$ is the smallest integer $j$ such
  that $I_j(\s) = I_n(\s)$ and $\langle\Q_n(\s)\cdot\s)\rangle =
  \langle\Q(t)\cdot t\rangle = \langle t \rangle \cap \langle x^{\v(t) +
  \eta(t)} \rangle$. Since $\langle t\rangle =
  \langle\Q_{n-1}(\s)\cdot\s\rangle = \langle s\rangle \cap
  \langle x^{\v(\s)+i}\rangle$ for some $i\leq n-1$, this leads to
  $\langle\Q_n(\s)\cdot\s)\rangle = \langle \s \rangle \cap \langle x^{\v(\s)
  + n} \rangle$.
\end{proof}

Our goal is to compute efficiently a mapping equivalent to~$\Q_n$. To
this effect, we introduce in Algorithm~\ref{algo:generichalfgcd} a
generalized version of the half-gcd algorithm~\cite[p. 320]{GGbook13}.
In order to control the degree in $y$ of the intermediate elements in
Algorithm~\ref{algo:generichalfgcd}, we
use a function $\varphi_{\s,n}:\scrM_2(\R)\to\scrM_2(\R)$,
that depends on an element $\s\in\Seq_k$ and an integer $n$.  

Let
$\s=(P,Q)$ be a pair of polynomials in $\Seq_k$, $n$ be an integer,
and let $\mathbf M = \left(\begin{smallmatrix} A & B\\ C & D
\end{smallmatrix}\right)$
be a matrix of $\scrM_2(\R)$. If $Q = 0$ or $\v(\mathbf M\cdot\s) =
\v(\s)$ then we let
$\varphi_{\s,n}(\mathbf M) = \Id$.
Otherwise, neither $P$ nor $Q$ are zero and we define
$\varphi_{\s,n}(\mathbf M)$ as the remainder of
\begin{align*}
  &\begin{pmatrix}
    L_Q(L_PA\remy N_u(Q))~~L_P(L_QB\remy N_u(P))\\
    L_Q(L_PC\remy N_u(Q))~~L_P(L_QD\remy N_u(P))
  \end{pmatrix}
\end{align*}
by $x^{n+1}$,
where $L_P,L_Q$ are $\Lc(P), \Lc(Q)$ and $N_u(P), N_u(Q)$ are two
polynomials with constant leading coefficients
such that $N(P) = x^{\v(P)} N_u(P)$ and $N(Q) = x^{\v(Q)} N_u(Q)$.

\begin{Lemma}[Size control]\label{lemma:varphi}
  Let $\s \in \Seq_k$ and assume $\mathbf M \cdot \s$ is also in
  $\Seq_k$.
  If $\mathbf M\cdot \s \sim \Q_n(\s)\cdot\s$, let $j=\v(\mathbf
  M\cdot\s)-\v(\s)$. Then $\varphi_{\s,j}(\m)\cdot\s\in\Seq_k$ and $\varphi_{\s,j}(\m)\cdot\s \sim \m \cdot
  \s$.
  
  Moreover the degree in $y$ of the entries in
  $\varphi_{\s,j}(\m)$ are bounded by $d_Q -1$ for the first column
  and $d_P-1$ for the second one, with $d_P=\deg_y(P)$ and $d_Q=\deg_y(Q)$.
\end{Lemma}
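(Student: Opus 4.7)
The three claims split naturally: the $y$-degree bound, the equivalence $\varphi_{\s,j}(\m)\cdot\s \sim \m\cdot\s$, and the containment $\varphi_{\s,j}(\m)\cdot\s\in\Seq_k$. The degree bound is immediate from the construction—each entry of the first column is the Euclidean remainder of an element of $\R$ by the normal polynomial $N_u(Q)$ (of $y$-degree $d_Q$), so has $y$-degree less than $d_Q$; symmetrically for column two with $N_u(P)$, and the $\remx x^{n+1}$ truncation only affects $x$-degrees.

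For the equivalence, let $\m''$ denote $\varphi_{\s,j}(\m)$ before the $\remx x^{j+1}$ truncation. Starting from the defining Euclidean divisions $L_PL_QA = \alpha_1 N_u(Q)+r_1$, $L_PL_QB = \beta_1 N_u(P)+r_2$ (and analogs $\alpha_2,\beta_2$ for row two), and using the identities $N_u(Q)\,P = L_P x^{\v(P)} N_u(P) N_u(Q)$ and $N_u(P)\,Q = L_Q x^{\v(Q)} N_u(P) N_u(Q)$—both direct from $P = L_P x^{\v(P)} N_u(P)$ and $Q = L_Q x^{\v(Q)} N_u(Q)$—the $i$-th component of $\m''\cdot\s$ rewrites as
\begin{equation*}
L_PL_Q\,(\m\cdot\s)_i \;-\; N_u(P)N_u(Q)\bigl(\alpha_i L_P x^{\v(P)} + \beta_i L_Q x^{\v(Q)}\bigr).
\end{equation*}
The correction term equals $-\alpha_i N_u(Q)P - \beta_i N_u(P)Q$, which sits in $\langle P,Q\rangle = \langle\s\rangle$. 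By the equivalence hypothesis and Lemma~\ref{lem:Qn}, $\langle\m\cdot\s\rangle = \langle\s\rangle \cap \langle x^{\v(\s)+j}\rangle$; a valuation analysis of $\alpha_i,\beta_i$—exploiting that $AP+BQ$ has $x$-valuation at least $\v(\s)+j$—shows the correction also lies in $\langle x^{\v(\s)+j}\rangle$, hence in $\langle\m\cdot\s\rangle$. Since $L_PL_Q$ is a unit, this yields $\langle\m''\cdot\s\rangle = \langle\m\cdot\s\rangle$. The $\remx x^{j+1}$ truncation then alters $\m''\cdot\s$ only by elements of $\langle\s\rangle$ of $x$-valuation at least $\v(\s)+j+1$, which remain in $\langle\m\cdot\s\rangle$; hence $\varphi_{\s,j}(\m)\cdot\s \sim \m\cdot\s$.

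For the containment in $\Seq_k$: sharing an ideal with $\m\cdot\s \sim \Q_n(\s)\cdot\s$, the pair $\varphi_{\s,j}(\m)\cdot\s$ inherits the leading-term shape $(x^{\v(\s)+j}G,0)$ provided by Lemma~\ref{lem:Qn} (where $G$ is the relevant gcd), which forces either the strict valuation inequality required of $\Seq_k$ or the zero pair. The main obstacle of the proof is the valuation analysis inside the correction term: verifying that the combination $\alpha_i L_P x^{\v(P)} + \beta_i L_Q x^{\v(Q)}$ has $x$-valuation at least $\v(\s)+j$ requires carefully tracking how cancellations in $AP+BQ$ propagate through the Euclidean divisions by $N_u(Q)$ and $N_u(P)$.
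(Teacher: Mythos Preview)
Your decomposition and the identity
\[
(\mathbf{M}''\cdot\s)_i = L_PL_Q\,(\mathbf{M}\cdot\s)_i - N_u(P)N_u(Q)\bigl(\alpha_i L_P x^{\v(P)}+\beta_i L_Q x^{\v(Q)}\bigr)
\]
are correct, and match the paper's setup. The gap is in the step you flag as ``the main obstacle'': you assert that a valuation analysis, ``exploiting that $AP+BQ$ has $x$-valuation at least $\v(\s)+j$'', bounds the valuation of the correction term, but you never carry it out---and in fact a pure valuation argument does not work. Knowing only $\v(AP+BQ)\ge\v(\s)+j$ gives no control over $\v(\alpha_i)$, $\v(\beta_i)$, or their combination: there is nothing to prevent low-order cancellation in $AP+BQ$ while $\alpha_i,\beta_i$ individually have valuation~$0$. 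The paper's mechanism is a $y$\nobreakdash-\emph{degree} comparison, not a valuation chase. In the normal case one has $\deg_y(\varphi_{\s,j}(\mathbf{M})\cdot\s)_1<d_P+d_Q$ by the degree bounds you already proved, while $N_u(P)N_u(Q)$ has an \emph{invertible} leading coefficient in degree exactly $d_P+d_Q$. Reducing modulo $x^{\v(\s)+j+1}$ and using that $\Pi_1(G)$ has $y$\nobreakdash-degree $<d_P$ (this is where the hypothesis $\mathbf{M}\cdot\s\sim\Q_n(\s)\cdot\s$ enters, via Lemma~\ref{lem:Euclidean}), one concludes that the coefficient of $N_u(P)N_u(Q)$ must vanish modulo $x^{\v(\s)+j+1}$. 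Your sketch never invokes the degree bound on $\Pi_1(G)$, so the hypothesis is not actually used where it is needed.

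There are two consequences of this missing step. First, even if the correction lay in $\langle\mathbf{M}\cdot\s\rangle$, your inference ``since $L_PL_Q$ is a unit, this yields $\langle\mathbf{M}''\cdot\s\rangle=\langle\mathbf{M}\cdot\s\rangle$'' is unjustified: from $C_1,C_2\in\langle G,H\rangle$ one cannot conclude $\langle uG+C_1,\,uH+C_2\rangle=\langle G,H\rangle$ (take $u=1$, $C_1=-G$, $C_2=0$). What is needed is $C_1,C_2\in\langle xG,xH\rangle$, so that the transition matrix has the form $u\,\Id+x\mathbf{R}$ and is invertible; this requires the \emph{stronger} bound $\v(\text{correction})\ge\v(\s)+j+1$, which is precisely what the degree argument delivers. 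Second, your $\Seq_k$-membership paragraph is incorrect: two pairs generating the same ideal need not share a leading-term shape (for instance $(G,0)$ and $(G,G)$ generate the same ideal but the latter is not in $\Seq_k$). The paper obtains $\Pi_1(\varphi_{\s,j}(\mathbf{M})\cdot\s)=\Pi_1(G,H)$ directly from the correction lying in $\langle xG,xH\rangle$, which again hinges on the degree comparison you omitted.
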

\begin{proof}
  The degree bound follows from $\deg(L_P)+\deg(N_u(P)) = d_P$, and
similarly for $Q$.
If $Q=0$ or $\v(\mathbf M\cdot\s) = \v(\s)$ then $\Q_n(\s) = \Id$ for all $0 \leq n < k$, and
  $\varphi_{s,j}(\mathbf M) = \Id = \Q_n(\s)$.
  
  Now let $s=(P,Q)$ a pair in $\Seq_k$ with $Q\neq 0$ and $\v(\mathbf
  M\cdot\s)>\v(\s)$ and let
  $\left(\begin{smallmatrix}G\\H\end{smallmatrix}\right) = \mathbf{M}
    \cdot \left(\begin{smallmatrix}P\\Q\end{smallmatrix}\right)$.  By
      assumption, $\langle G,H\rangle  = \langle P,Q\rangle \cap \langle
      x^{\v(\s)+j}\rangle$. Moreover, $\Pi_1(G,H) = (\Pi_1(G), 0)$ and
      Lemma~\ref{lem:Euclidean} shows that $\deg_y(\Pi_1(G))< d_P$.
      
      Assume first that $P$ and $Q$ are
      normal. In this case, $L_P = L_Q = 1$ and we have
      $$\varphi_{\s,j}(\mathbf M) \cdot
      \left(\begin{matrix}P\\Q\end{matrix}\right) = 
        \left(\begin{matrix}
            G+KN_u(P)N_u(Q)+Lx^{j+1}\\
            H+MN_u(P)N_u(Q)+Nx^{j+1}
        \end{matrix}\right),$$
        where $K, M$ are polynomials of $\R$ and $L,N$ are in
        $\langle P,Q\rangle $, so that $\v(L)\geq\v(s)$ and $\v(N)\geq\v(\s)$.
        In this case we also know  that the
        degrees of $G+KN_u(P)N_u(Q)+Lx^{j+1}$ and $H+MN_u(P)N_u(Q)+Nx^{j+1}$ are lower
        than $d_P+d_Q$. On the other hand, $G \remx x^{\v(\s)+j+1}$ has a
  degree in $y$ less than $d_P$ and $H \remx x^{\v(\s)+j+1} =
  0$. Also, $Lx^{j+1} \remx x^{\v(\s)+j+1} = Nx^{j+1} \remx
  x^{\v(\s)+j+1} = 0$.
  Hence, $K$ and $M$ have a valuation greater than or
  equal to $\v(\s)+j+1$ and $KN_u(P)N_u(Q)$ and $MN_u(P)N_u(Q)$ are in
  $\langle x^{j+1}P\rangle $.
  
  Thus $\varphi_{\s,j}(\mathbf M) \cdot
  \left(\begin{smallmatrix}P\\Q\end{smallmatrix}\right) 
  = \left(\begin{smallmatrix} G + G'\\ H + H'\end{smallmatrix}\right)$,
  where $G'$ and $H'$ are two polynomials of
  $\langle x^{j+1}P,x^{j+1}Q\rangle $.  In particular, by assumption this
  implies that $G'$ and $H'$ belongs to the ideal $\langle xG,xH\rangle $.
  Thus there is an invertible matrix that sends
$\left(\begin{smallmatrix}G\\H\end{smallmatrix}\right)$ to
$\left(\begin{smallmatrix}G+G'\\H+H'\end{smallmatrix}\right)$ and
  $\langle G,H\rangle  = \langle G+G', H+H'\rangle $ and
  $\Pi_1(G+G',H+H')=\Pi_1(G,H)$. In particular $\varphi_{\s,j}(\mathbf
  M) \cdot \s \in \Seq_k$.

  If $P$ and $Q$ are not normal, let $\mathbf L$ be the
  matrix $\left(\begin{smallmatrix} L_P & 0\\ 0 & L_Q
  \end{smallmatrix}\right)$. In this case, $\m \cdot \s =
  \m\cdot \mathbf L \cdot \left(\begin{smallmatrix} N(P)\\N(Q)
  \end{smallmatrix}\right)$. Hence, using the first part of the proof
  on $N(\s) = (N(P),N(Q))$, we have 
  \begin{align*}
  \varphi_{N(\s),j}(\mathbf M \cdot \mathbf L)\cdot
    \mathbf L^{-1} \cdot \s
  & = \varphi_{N(\s),j}(\mathbf M \cdot \mathbf L) \cdot
    \left(\begin{smallmatrix} N(P)\\N(Q) \end{smallmatrix}\right)\\
  & \sim \mathbf M \cdot \mathbf L \cdot
    \left( \begin{smallmatrix} N(P)\\N(Q) \end{smallmatrix}\right)\sim \mathbf M \cdot \s.
  \end{align*}
  Then, $\varphi_{\s,j}(\mathbf M) =
  \varphi_{N(\s),j}(\mathbf M \cdot \mathbf L)\cdot \left(\det(\mathbf
  L) \mathbf L^{-1}\right) \mod x^{j+1}$. In particular, using the same
  argument as above, this implies
  that
  $\varphi_{\s,j}(\mathbf M)\cdot\s =
  \varphi_{N(\s),j}(\mathbf M \cdot \mathbf L)\cdot \left(\det(\mathbf
  L) \mathbf L^{-1}\right) \cdot \s \mod \left<xG,xH\right>$.
  Finally, we can factor out $\det(\mathbf L)$ and since it is
  invertible, this implies that $\det(\mathbf L) \varphi_{N(\s),j}(\mathbf M \cdot \mathbf L)\cdot
  \mathbf
  L^{-1} \cdot \s$ generates the ideal $\left<G,H\right>$. Finaly, using
  tthe same argument as above, this leads to $\varphi_{\s,j}(\mathbf M)
  \cdot \s \sim \mathbf M \cdot \s$.
\end{proof}

\begin{algorithm}
    \caption{Generalized version of the half-gcd}
    \label{algo:generichalfgcd}
    \begin{algorithmic}
        \Function{GenericHalfGcd}{$n, \s$}
        \State $\tilde \s \gets \Pi_{n+1}(\s)$
        \If{$n = 0 $ or $\tilde \s =(0,0)$}
            \State \Return $\Id$
        \EndIf
        \State $\mathbf R \gets \generichalfgcd (\lfloor\frac n 2\rfloor,
                                    \tilde \s)$
        \State $n' \gets n - (\v(\mathbf R \cdot \tilde \s) -
            \v(\tilde \s))$
        \State $\tilde \u \gets \Pi_{n'+1}(\mathbf R \cdot
            \tilde \s)$
        \State $\eta \gets \min(\eta(\tilde \u), n'+1)$
        \If{$\eta > n'$}
            \State \Return $\mathbf R$
        \EndIf
        \State $\tilde \t \gets \Pi_{n'- \eta+1}
            (\mathbf \Q(\tilde \u) \cdot \tilde \u)$
        \State $\mathbf S \gets \generichalfgcd (
            n - \left(\v(\tilde \t)-\v(\tilde \s)\right),
            \tilde \t)$
            \State $\mathbf M \gets \varphi_{\tilde \s, \v(\mathbf
              S\cdot\tilde \t) - \v(\tilde \s)}
            (\mathbf S \cdot \Q(\tilde \u) \cdot \mathbf R)$
        \State \Return $\mathbf M$
        \EndFunction

    \end{algorithmic}
\end{algorithm}

\vspace{-5mm}

\begin{Lemma}[Generalized half-gcd]
    \label{lem:generichalfgcd}
    For $n \in \mathbb N$, denote by $\H(n,.): s\mapsto \H(n,s)$ the function
    computed by Algorithm~\ref{algo:generichalfgcd}. Then,  $\H(n,.)\sim \Q_n$. 
\end{Lemma}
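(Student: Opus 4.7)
The plan is to prove the equivalence by strong induction on $n$. The base cases are immediate: when $n = 0$ the algorithm returns $\Id$, and by definition $\Q_0(\s) = \Id$ since valuation strictly increases at each $\Q$-step; when $\tilde \s = (0,0)$, the precision budget forces $\s = (0,0)$ and once again both sides equal $\Id$.

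For the inductive step I would first isolate a small ``online'' property of $\Q_m$ that is not stated as a lemma but follows by induction on $m$ from Lemmas~\ref{lem:QisEuclidean} and~\ref{lem:Qn}: if $\s' = \Pi_j(\s)$ with $j$ at least as large as $\v(\Q_m(\s)\cdot\s) - \v(\s)$, then $\Q_m(\s')\cdot\s' \sim \Q_m(\s)\cdot\s$. Applied to the first recursive call, the induction hypothesis yields $\mathbf R \cdot \tilde \s \sim \Q_{\lfloor n/2 \rfloor}(\tilde \s) \cdot \tilde \s$, and this online property lifts it to $\mathbf R \cdot \s \sim \Q_{\lfloor n/2 \rfloor}(\s) \cdot \s$, since $\tilde \s = \Pi_{n+1}(\s)$ carries more than enough precision to drive the first $\lfloor n/2\rfloor$ steps of the sequence~(\ref{eq:recursivesequence}).

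Next I would handle the two branches of the algorithm. Set $n' = n - (\v(\mathbf R \cdot \tilde \s) - \v(\tilde \s))$ and $\tilde \u = \Pi_{n'+1}(\mathbf R \cdot \tilde \s)$. If $\eta > n'$, then Lemma~\ref{lem:Euclidean0} shows that no further $\Q$-step would fit within the remaining valuation budget, so $\Q_n(\s) \cdot \s \sim \Q_{\lfloor n/2 \rfloor}(\s) \cdot \s$ and $\mathbf R$ is already the correct answer. Otherwise, by online-ness of $\Q$, applying $\Q$ once to $\tilde \u$ is equivalent to applying the next step of the iteration~(\ref{eq:recursivesequence}) to $\mathbf R \cdot \s$; the second recursive call on $\tilde \t$ at the remaining precision then gives, by induction, a matrix $\mathbf S$ equivalent to the tail of that iteration. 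Combining, $\mathbf S \cdot \Q(\tilde \u) \cdot \mathbf R \cdot \s \sim \Q_n(\s) \cdot \s$.

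The final step is to pass through $\varphi_{\tilde \s, j}$ with $j = \v(\mathbf S \cdot \tilde \t) - \v(\tilde \s)$. By Lemma~\ref{lemma:varphi}, this operation preserves the equivalence class of the product matrix applied to $\tilde \s$, and a last appeal to the online property transfers this equivalence from $\tilde \s$ back to $\s$. \textbf{The main obstacle} is the bookkeeping of precisions: at each of the four splices — the two recursive calls, the intermediate $\Q$-step, and the $\varphi$-composition — one must verify that the truncation in force is deep enough relative to the growing valuations, so that matrices built from truncated inputs can be replaced (up to $\sim$) by those built from $\s$ itself. This is exactly the role of the online property of $\Q$ and its $m$-step extension established at the outset.
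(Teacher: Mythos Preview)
Your inductive skeleton matches the paper's, and your treatment of the two branches (early exit when $\eta > n'$, versus one $\Q$-step followed by the second recursive call) is correct in outline. The gap is in how you splice the truncated input $\tilde\s$ back to $\s$. Your ``online property of $\Q_m$'' only relates $\Q_m(\tilde\s)\cdot\tilde\s$ to $\Q_m(\s)\cdot\s$; it says nothing about the concrete matrix $\mathbf R=\H(n_0,\tilde\s)$ acting on $\s$ versus on $\tilde\s$. From the induction hypothesis you get $\mathbf R\cdot\tilde\s \sim \Q_{n_0}(\tilde\s)\cdot\tilde\s$, and your online property then gives $\mathbf R\cdot\tilde\s \sim \Q_{n_0}(\s)\cdot\s$. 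But to reach $\mathbf R\cdot\s \sim \Q_{n_0}(\s)\cdot\s$ you would still need $\mathbf R\cdot\s \sim \mathbf R\cdot\tilde\s$, i.e.\ that the ideals $\langle \mathbf R\cdot\s\rangle$ and $\langle \mathbf R\cdot\tilde\s\rangle$ coincide, and nothing you have stated yields this. The same issue recurs at the second recursive call and at the final $\varphi$ step, where you again invoke the online property to ``transfer back to $\s$''.

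The paper closes this gap by a different, simpler device: it observes that for any $m$ the matrix $\H(m,\,\cdot\,)$ depends only on $\Pi_{m+1}(\,\cdot\,)$ (immediate from the first line of the algorithm). Since $\Pi_{n_0+1}(\tilde\s)=\Pi_{n_0+1}(\s)$, one has $\mathbf R=\H(n_0,\tilde\s)=\H(n_0,\s)$ as matrices, and the induction hypothesis applied directly at $\s$ gives $\mathbf R\cdot\s=\H(n_0,\s)\cdot\s\sim\Q_{n_0}(\s)\cdot\s$ with no lifting needed. The same trick handles the second recursive call (since $\Pi_{n_1+1}(\tilde t)=\Pi_{n_1+1}(\Q(\tilde u)\cdot\mathbf R\cdot\s)$). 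With this observation in place, your $m$-step online property of $\Q_m$ becomes unnecessary; the only online property actually used is that of the single-step $\Q$, at the intermediate $\Q(\tilde u)$ step.
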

\begin{proof}
    We prove by recurrence on $n$ that $\H(n,.) \sim \Q_n$.  For
    $n=0$, and for any $\s\in\Seq_k$, we have by construction
    $\H(0,\s) = \Id$. On the other hand, we saw that for all $\s$,
    $\Q_0(\s) = \Id$, so our claim holds.
    Now assume that $\H(i,.)\sim\Q_i$ for $0 \leq i < n$; we prove
    that the equivalence holds for $n$.  Let $n_0=\lfloor\frac n
    2\rfloor, n_1=n-\left(\v(\tilde \t)-\v(\tilde \s)\right)
    $. 

    For any $\s,\s'$, if
    $\Pi_{n+1}(\s)=\Pi_{n+1}(\s')$, then $\H(n,\s)=\H(n,\s')$, since
    then $\tilde s=\tilde s'$.  In particular, with the notation of
    the algorithm, $\Pi_{n_0+1}(\s)=\Pi_{n_0+1}(\tilde \s)$, which
    implies that $\H(n_0,\tilde \s)=\H(n_0,\s)$.  Hence, by recurrence
    assumption, we get that $\mathbf R \cdot \s = \H(n_0, \s) \cdot \s
    \sim \Q_{n_0}(\s) \cdot \s$.

    The definition of $\Q_{n_0}$ implies that $\v(\Q_{n_0}(\s)\cdot s)
    -\v(\s) \le \lfloor n/2 \rfloor$, and by the claim above, we get that
    $\v(\mathbf R \cdot s) -\v(\s) \le \lfloor n/2 \rfloor$.
    This implies that $\v(\mathbf R \cdot s)=\v(\mathbf R \cdot \tilde s)$,
    and,
    since $n'=n- (\v(\mathbf R \cdot \tilde \s) - \v(\tilde \s))$,  that $\Pi_{n'+1}(\mathbf R \cdot \s) =
    \Pi_{n'+1}(\mathbf R \cdot \tilde{\s}) = \tilde{u}$. 

    To continue, we distinguish two cases.  If $\eta(\mathbf R \cdot
    \s) \geq n' + 1$ then the first part of $\Q$ being online shows
    that $\eta(\Pi_{n'+1}(\mathbf R \cdot \s)) = \eta(\tilde u)\ge
    n'+1$.  Hence, in this case, the algorithm returns $\mathbf R$. On
    the other hand, we will prove that in this case, $\Q_{n}(\s) =
    \Q_{n_0}(\s)$; one this is established, this implies that
    $\Q_{n}(\s)\cdot s\sim \mathbf R \cdot \s$, so our correctness
    claim holds in this case. Indeed, $\eta(\mathbf R \cdot
    \s)=\eta(\Q_{n_0}(\s)\cdot s)$ (in view of the equivalence written above,
    and of Lemma~\ref{lem:Euclidean0}), which
    can be rewritten as
    $\v(\Q(\Q_{n_0}(\s)\cdot s)\cdot \Q_{n_0}(\s)\cdot s)-\v(\Q_{n_0}(\s)\cdot s).$
    On the other hand,     $n'=n-(\v(\mathbf R \cdot \tilde \s) - \v(\tilde \s))$
    gives
    $n'=n-\v(\Q_{n_0}(\s)\cdot s)+\v(\s).$
    Hence, $\eta(\mathbf R \cdot
    \s) \geq n' + 1$ means that
    $$\v(\Q(\Q_{n_0}(\s)\cdot s)\cdot \Q_{n_0}(\s)\cdot s) - \v(s) \ge n+1,$$
    which precisely implies that $\Q_{n_0}(\s)=\Q_n(\s)$.

    If $n'+1>\eta(\mathbf R \cdot \s)$, $\Q$ being online leads to the
    equivalence $\Q(\tilde u) \cdot \mathbf R \cdot \s \sim \Q(\mathbf
    R \cdot \s) \cdot \mathbf R \cdot \s$. We claim that the
    right-hand side has valuation $\v(\tilde \t)=\v(\Q(\tilde u)\cdot
    \tilde u).$ Indeed, the proof of Lemma~\ref{lem:QisEuclidean}
    establishes the existence of a matrix $\mathbf K$ such that
    $\Q(\tilde u)=(\Id +\mathbf K) \Q(\mathbf R\cdot s)$; this implies 
    that $\v(\Q(\tilde u)\cdot \tilde u)=\v(\Q(\mathbf R\cdot s)\cdot \tilde u)$.
    On the other hand, the inequality  $n'+1>\eta(\mathbf R \cdot \s)$
    also implies that $\v(\Q(\mathbf R\cdot s)\cdot \tilde u)$
    and $\v(\Q(\mathbf R\cdot s)\cdot \mathbf R \cdot s)$
    are the same, which proves our claim.

    Using $\mathbf R \cdot \s \sim \Q_{n_0}(\s) \cdot \s$, and 
    $\Q$ being Euclidean, we  get $\Q(\mathbf
    R \cdot \s) \cdot \mathbf R \cdot \s  \sim
    \Q(\Q_{n_0}(\s) \cdot \s )\cdot \Q_{n_0}(\s) \cdot \s.$
    We claim that the right-hand side is equivalent
    to $\Q_{n-n_1}(\s)$, which will prove 
    $\Q(\tilde u) \cdot \mathbf R \cdot \s \sim \Q_{n-n_1}(\s)\cdot\s$.
    Indeed, by definition, $\Q_{n-n_1}(\s)$ is the last
    element in the sequence $(\mathbf Q_i\cdot \s)$
    from~\eqref{eq:recursivesequence} having valuation at most
    $\v(\tilde t)$. On the other hand, the previous paragraph proves that  $\Q(\Q_{n_0}(\s) \cdot \s) \cdot
    \Q_{n_0}(\s) \cdot \s$, which belongs to the sequence $(\mathbf
    Q_i\cdot \s)$, has valuation $\v(\tilde \t)$; this is enough to
    conclude, since the valuations $v(\mathbf Q_i\cdot \s)$ increase. 

    Then $\mathbf S = \H(n_1,\tilde \t)$; by recurrence assumption,
    $\H(n_1,.)\sim\Q_{n_1}$. Moreover, $\tilde \t =
    \Pi_{n_1+1}(\Q(\tilde u) \cdot \tilde u) = \Pi_{n_1+1}(\Q(\tilde u)
    \cdot \mathbf R \cdot \s)$, and by construction
    $\H(n_1,\tilde t) = \H(n_1,\Q(\tilde u) \cdot \mathbf R \cdot \s)$,
    so that
    \begin{align*}
      \mathbf S \cdot \Q(\tilde u) \cdot \mathbf R \cdot \s
      &\sim \Q_{n_1}(\Q(\tilde u) \cdot \mathbf R \cdot \s)
        \cdot \Q(\tilde u) \cdot \mathbf R \cdot \s\\
      &\sim \Q_{n_1}(\Q_{n-n_1}(\s)) \cdot \Q_{n-n_1}(\s)\cdot \s.
    \end{align*}
    The definition of
    the sequence $(\Q_n)$ implies that the latter expression is equivalent to $\Q_n(\s) \cdot \s$.

    Finally, since the function $\varphi_{\s,n}$ satisfies
    $\varphi_{\s,n}(\mathbf S \cdot \Q(\tilde \u) \cdot \mathbf R)
    \cdot \s \sim \mathbf S \cdot \Q(\tilde \u) \cdot \mathbf R \cdot
    \s$, we conclude that $\H(n,.) \sim \Q_n$.
\end{proof}

\begin{Lemma}
    \label{lem:pseudoinverse}
    Let $\s=(P,Q)$ be in $\Seq_k$, of degrees at most $d$.  For $n>
    0$, Algorithm~\ref{algo:generichalfgcd} computes
    $\H(n,\s)$ in time
    $\O(\sfM(dn)\log(n) + \sfM(d)n\log(d)).$
\end{Lemma}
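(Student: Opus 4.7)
The plan is to analyze Algorithm~\ref{algo:generichalfgcd} via a standard divide-and-conquer recurrence on the precision parameter~$n$. Writing $T(d, n)$ for the cost of a call whose input $\s$ has entries of $y$-degree at most $d$, I aim to establish a bound of the form
\[
T(d, n) \le T(d, n_0) + T(d, n_1) + C(d,n),
\]
with $n_0 = \lfloor n/2\rfloor$, $n_1 \le \lceil n/2\rceil$, and $C(d,n) = \O(\sfM(dn) + \sfM(d)\log d)$, after which the claim will follow from a standard summation using the super-linearity of $\sfM$.

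First I would establish the invariant, by induction on $n$, that every matrix returned by the algorithm has entries of $y$-degree at most $d-1$. This is exactly what the wrapping by $\varphi_{\tilde s,\cdot}$ at the last step guarantees, via the size control of Lemma~\ref{lemma:varphi}. As a consequence, the inputs $\tilde s$ and $\tilde t$ of the two recursive calls both have $y$-degree at most~$d$: for $\tilde s = \Pi_{n+1}(\s)$ this is immediate since $\Pi_{\cdot}$ does not touch~$y$, and for $\tilde t$ it follows from item~5 of Lemma~\ref{lem:Euclidean} applied to $\tilde u$, using that $\Pi_1(\tilde u) = \Pi_1(\Q_{n_0}(\tilde s)\cdot\tilde s)$ has $y$-degree at most $\deg_y(\Pi_1(\tilde s)_1) \le d$ by item~3 of the same lemma iterated along the sequence~\eqref{eq:recursivesequence}.

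Next I would bound the non-recursive cost $C(d,n)$. The truncations are free. By Kronecker substitution, the matrix-vector product $\mathbf R\cdot\tilde s$ (with $\mathbf R$ of $y$-degree $<d$ and $x$-precision at most $n_0+1$, applied to $\tilde s$ of $y$-degree $\le d$ and $x$-precision $n+1$) costs $\O(\sfM(dn))$. Computing $\eta$ and $\Q(\tilde u)\cdot\tilde u$ costs $\O(\sfM(dn) + \sfM(d)\log d)$ by Lemmas~\ref{lem:computeEta} and~\ref{lem:Euclidean}. The final matrix product $\mathbf S\cdot\Q(\tilde u)\cdot\mathbf R$ and the application of $\varphi_{\tilde s,\cdot}$ are both manipulations of $2\times 2$ matrices of $y$-degree $\O(d)$ at $x$-precision $\O(n)$, and each again costs $\O(\sfM(dn))$; summing yields the claimed bound on $C(d,n)$.

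The main technical point is to justify $n_1 \le \lceil n/2\rceil$. For this I would invoke the correctness statement Lemma~\ref{lem:generichalfgcd}: the first recursive call yields $\mathbf R$ with $\mathbf R\cdot\tilde s\sim \Q_{n_0}(\tilde s)\cdot\tilde s$, whose valuation jump $j := \v(\mathbf R\cdot\tilde s) - \v(\tilde s)$ is at most $n_0$ by definition of $\Q_{n_0}$. When the algorithm does not return early, the next step $\Q(\tilde u)\cdot\tilde u$ adds a further strictly positive increment $\eta$, and the maximality in the definition of $\Q_{n_0}$ forces $j + \eta \ge n_0 + 1$; hence $\v(\tilde t) - \v(\tilde s) = j+\eta \ge n_0+1$ and $n_1 \le n - n_0 - 1 < \lceil n/2\rceil$. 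Plugging these bounds into the recurrence, the cost at recursion depth~$i$ is at most $2^i\sfM(dn/2^i) + 2^i\sfM(d)\log d \le \sfM(dn) + 2^i\sfM(d)\log d$ by super-linearity, and summing over the $\O(\log n)$ levels yields the total $\O(\sfM(dn)\log n + n\sfM(d)\log d)$, as required.
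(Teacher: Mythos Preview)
Your overall structure is correct and matches the paper's: the recursion depths $n_0,n_1\le\lfloor n/2\rfloor$, the degree invariant on the returned matrices via $\varphi$, and the final summation are all handled the way the paper does.

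There is, however, a genuine gap in your cost analysis of the last step. You write that ``the final matrix product $\mathbf S\cdot\Q(\tilde u)\cdot\mathbf R$ and the application of $\varphi_{\tilde s,\cdot}$ are both manipulations of $2\times 2$ matrices of $y$-degree $\O(d)$''. This is not true for $\Q(\tilde u)$. By definition $\Q(\tilde u)=\mathbf N_{\cdot}\cdot\mathbf G_{\cdot}\cdot\mathbf D_{\cdot}$, and the matrix $\mathbf N$ contains the entry $\Lc(A)^{-1}$ for some $A$ of $y$-degree at most $d$. The factor $\Lc(A)$ is a unit of $\R$ with $\Pi_1(\Lc(A))=1$, but its higher $x$-coefficients can have $y$-degree up to $d$; consequently, at $x$-precision $\ell\le n$ the inverse $\Lc(A)^{-1}$ can have $y$-degree $\Omega(d\ell)$, so $\Q(\tilde u)$ has total size $\Omega(dn^2)$ rather than $\O(dn)$. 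A direct computation of $\mathbf S\cdot\Q(\tilde u)\cdot\mathbf R$ would therefore not fit in $\O(\sfM(dn))$.

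The paper addresses precisely this point: since $\varphi_{\tilde s,\ell}$ reduces the first column modulo $N_u(Q)$ and the second modulo $N_u(P)$ anyway, one never forms $\Q(\tilde u)$ explicitly. Instead one computes two reduced versions $\mathbf Q_1=\Q(\tilde u)\bmod N_u(P)$ and $\mathbf Q_2=\Q(\tilde u)\bmod N_u(Q)$, obtaining $\Lc(A)^{-1}\bmod N_u(P)$ (resp.\ $N_u(Q)$) by Newton iteration in $x$ directly in the quotient ring, in time $\O(\sfM(dn))$. Then $\mathbf M_1=\varphi_{\tilde s,\ell}(\mathbf S\cdot\mathbf Q_1\cdot\mathbf R)$ and $\mathbf M_2=\varphi_{\tilde s,\ell}(\mathbf S\cdot\mathbf Q_2\cdot\mathbf R)$ are formed, and $\mathbf M$ is assembled from the appropriate columns of each. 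With this modification your argument goes through; without it, the claimed $\O(\sfM(dn))$ bound on $C(d,n)$ is unjustified.
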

\begin{proof}
  For $n=0$, for any $\s$, computing $\H(0,\s)$ takes constant time.
  For a higher value of $n$, remark that the recursive calls are made
  with arguments $n_0,n_1$ that are at most $\lfloor \frac n 2
  \rfloor$: this is clear for $n_0$; for $n_1=n-(\v(\tilde
  t)-\v(\tilde s))$, this is because $\v(\tilde t)-\v(\tilde
  s)=\v(\Q(\tilde u)\cdot \u)-\v(\tilde s)$ must be greater than
  $\lfloor n/2\rfloor$, by definition of $\Q_{n_0}$.

  The matrix $\mathbf R$ has entries of degree at most $d$ (because
  $\tilde s$ does), so computing $\tilde u$ takes time $\O(\sfM(dn))$,
  and its entries have degree $\O(d)$. More precisely, we saw in the
  proof of the previous lemma that $\mathbf R \cdot \s \sim
  \Q_{n_0}(\s) \cdot \s$. Because $\Q_{n_0}$ is obtained by iterating
  $\Q$, the last item in Lemma~\ref{lem:Euclidean} shows that the
  first entry of $\Q_{n_0}(\s) \cdot \s$ has degree less than $d$; as
  pointed out before, this implies the same property for $\mathbf R
  \cdot \s$, and thus for $\tilde u$ (which is a truncation of it).
 
  Computing $\eta$ takes time $\O(\sfM(dn))$ by
  Lemma~\ref{lem:computeEta}, and the same holds for $\tilde t$ by
  Lemma~\ref{lem:Euclidean}, up to an extra term
  $\O(\sfM(d)\log(d))$. In addition, that lemma shows that the entries
  of $\tilde t$ have degree less than that of the first entry of
  $\tilde u$, and thus less than $d$.

  After the last recursive call, it remains to compute $\mathbf M$.
  We first compute $\mathbf Q_1 = \Q(\tilde u) \mod (N_u(P), x^\ell)$ and
  $\mathbf Q_2 = \Q(\tilde u) \mod (N_u(Q), x^\ell)$, where $\ell =
  \v(\mathbf S \cdot \tilde t) - \v(\tilde \s) \leq n$. This can be
  done in time $\O(\sfM(dn))$: $\Q(\tilde u)$ is a product of three
  matrices, called $\mathbf G, \mathbf D,\mathbf N$ in
  Subsection~\ref{ssec:Q}. The first two have polynomial entries of
  degree at most $d$, and can be computed in time $\O(\sfM(dn))$; the
  last one involves a denominator of the form $\Lc(U)^{-1}$, for some
  polynomial $U$ of degree at most $d$. The inverse of $\Lc(U)$ may
  have degree $\Omega(dk)$, but one can directly compute it modulo
  $N_u(P)$ or $N_u(Q)$ using Newton iteration on $x$ in time
  $\O(\sfM(dn))$.

  Then we compute $\mathbf M_1 = \varphi_{\tilde s, \ell}(\mathbf S \cdot
  \mathbf Q_1 \cdot \mathbf R)$ and $\mathbf M_2 = \varphi_{\tilde s,
    \ell}(\mathbf S \cdot \mathbf Q_2 \cdot \mathbf R)$ and we let
  $\mathbf M$ be the concatenation of the first column of $\mathbf
  M_2$ and the second column of $\mathbf M_1$. Thus $\mathbf M$ can be
  computed in time $\O(\sfM(dn))$.

  Overall, the time spent on input $(n,\s)$ is
  $\O(\sfM(dn)+\sfM(d)\log(d))$, plus two recursive calls with
  parameter at most $\lfloor n/2\rfloor$, in degree at most $d$.
  The total is thus $\O(\sfM(dn)\log(n)+\sfM(d)n\log(d))$.
\end{proof}


\subsection{Computing pseudo-inverses}

Given two polynomials $P$ and $Q$ in $\R$, we will use
Algorithm~\ref{algo:generichalfgcd} to compute $U$ and $V$ such that
$UP + VQ = x^t \mod x^{t+1},$ where $t$ is the smallest integer
such that $\left<P,Q\right> \cap \langle x^t\rangle = \langle
x^t\rangle$. In particular, the valuation of the resultant
of $P$ and $Q$ is greater than or equal to~$t$.

\begin{Corollary}
    \label{cor:pseudo-inverse}
    Assume that $P,Q\in \R$ have degree at most $d$, and let $t$ be
    the minimal integer such that $\left<P,Q\right> \cap \langle
    x^t\rangle = \langle x^t\rangle$. It is possible to compute in
    time
    $\O(\sfM(dt)\log (t) + \sfM(d)t\log (d))$ two polynomials $U, V \in \R$ such that
    $   UP + VQ = x^t \bmod x^{t+1},  $
    with $\deg_y(U) < \deg_y(Q)$ and $\deg_y(V)<\deg_y(P)$.
\end{Corollary}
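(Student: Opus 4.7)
The plan is to invoke Algorithm~\ref{algo:generichalfgcd} on input $(t, s)$, where $s = (P, Q)$, and read off $(U, V)$ from the top row of the returned matrix $\mathbf{M}$ after a single scalar normalization. First, I would preprocess to ensure $s \in \Seq_k$: swap the two entries if $\v(P) > \v(Q)$, and premultiply by $\mathbf{G}_s$ in the boundary case $\v(P) = \v(Q)$. Both operations preserve $\langle P, Q\rangle$ and cost at most $\O(\sfM(d)\log d)$, which is absorbed in the target bound; they affect $U$ and $V$ only by a known invertible linear change at the end.

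Next, by Lemma~\ref{lem:generichalfgcd} the matrix $\mathbf{M} = \H(t, s)$ satisfies $\mathbf{M} \cdot s \sim \Q_t(s) \cdot s$, and by Lemma~\ref{lem:Qn} we have $\langle \mathbf{M} \cdot s \rangle = \langle s\rangle \cap \langle x^{\v(s) + j}\rangle$, where $j$ is the minimal integer with $I_j(s) = I_t(s)$. The hypothesis that $t$ is the smallest integer with $\langle P, Q\rangle \cap \langle x^t\rangle = \langle x^t\rangle$ is equivalent to saying that $t - \v(s)$ is the first index $i$ at which $1 \in I_i(s)$; combined with the strict growth of the $I_\cdot(s)$ at each $\eta$-step in Lemma~\ref{lem:Euclidean0}, this forces $\v(s) + j = t$, hence $\langle \mathbf{M} \cdot s\rangle = \langle x^t\rangle$. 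Writing $\mathbf{M} \cdot s = (G, H)$, iterating point~3 of Lemma~\ref{lem:Euclidean} along the recurrence~\eqref{eq:recursivesequence} shows that $\Pi_1(G)$ is a nonzero scalar multiple of $x^t$; dividing the top row of $\mathbf{M}$ by this scalar yields $(U, V)$ with $UP + VQ \equiv x^t \pmod{x^{t+1}}$.

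For the degree bound, the last operation in Algorithm~\ref{algo:generichalfgcd} is an application of $\varphi_{\tilde s, \cdot}$, so Lemma~\ref{lemma:varphi} gives $\deg_y(U) < \deg_y(Q)$ (first-column bound) and $\deg_y(V) < \deg_y(P)$ (second-column bound). The complexity estimate follows directly from Lemma~\ref{lem:pseudoinverse} with $n = t$, since the preprocessing cost is dominated.

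The main obstacle is the bookkeeping that identifies $j$ with $t - \v(s)$: one must reconcile the intrinsic characterization of $\eta$ from Lemma~\ref{lem:Euclidean0} with the ideal identification in Lemma~\ref{lem:Qn} and match them against the minimality of $t$ in the statement, so that the first coordinate of $\mathbf{M}\cdot s$ truly has valuation~$t$ with constant leading $x$-coefficient. Once this correspondence is in place, the rest is an immediate consequence of the lemmas already established.
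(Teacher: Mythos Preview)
Your argument has one genuine gap: you propose to call $\generichalfgcd$ on input $(t,\s)$, but $t$ is \emph{not given}. The corollary defines $t$ as the least integer with $\langle P,Q\rangle\cap\langle x^t\rangle=\langle x^t\rangle$; it is part of what must be computed, and the complexity bound is output-sensitive in it. Running $\H(k,\s)$ instead would yield the correct $(U,V)$ but only the bound $\O(\sfM(dk)\log k+\sfM(d)k\log d)$, not the claimed $\O(\sfM(dt)\log t+\sfM(d)t\log d)$. The paper resolves this by a geometric search: it calls $\H(2^i,\s)$ for $i=1,2,\dots$ and stops as soon as $\Pi_1$ of the first coordinate of $\H(2^i,\s)\cdot\s$ has $y$-degree~$0$ (which happens exactly when $2^i\ge t'=t-\v(\s)$, since for smaller indices Lemma~\ref{lem:Euclidean} forces that degree to be positive). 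The costs form a geometric series dominated by the last term, at $n\le 2t'$, which gives the stated bound. Without this doubling step your proof does not establish the complexity claim.

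A smaller omission occurs in the boundary case $\v(P)=\v(Q)$. Premultiplying by $\mathbf G_\s$ and then undoing it at the end is indeed a linear change over $\K[y]$, but the entries of $\mathbf G_\s$ have $y$-degree up to $d-1$, so the resulting $U'',V''$ need not satisfy $\deg_y(U'')<\deg_y(Q)$ and $\deg_y(V'')<\deg_y(P)$. The paper repairs this with the same reduction used inside $\varphi$: replace $U''$ by $L_Q(L_PU''\remy N_u(Q))$ and $V''$ by $L_P(L_QV''\remy N_u(P))$, and check (as in Lemma~\ref{lemma:varphi}) that the relation $UP+VQ=x^t\bmod x^{t+1}$ survives. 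You should include this step rather than asserting that the degree bounds are automatically preserved.

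Apart from these two points, your identification of $j$ with $t-\v(\s)$ via Lemmas~\ref{lem:Euclidean0} and~\ref{lem:Qn}, and your appeal to Lemma~\ref{lemma:varphi} for the degree bounds in the main branch, match the paper's reasoning.
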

\begin{proof}
  Without loss of generality, assume that $\v(P) \le \v(Q)$.  Suppose
  first that we actually have $\v(P) < \v(Q)$, and define $\s=(P,Q)
  \in \Seq_k$. In the following we let $t' = t - \v(\s)$.

  In particular we have $I_{t'}(\s) = \left<1\right>$ and since for all
  $i<t'$, $I_i(\s)\varsubsetneq I_{t'}(\s)$, the properties of $\Q_{t'}$
  given in Lemma~\ref{lem:Qn} ensure that $\Q_{t'}(\s)\cdot\s = \langle
  s\rangle \cap \langle x^t\rangle = \langle x^t\rangle$.
  Moreover, for any integer $i\geq t'$, we have
  $\Q_i(\s) = \Q_{t'}(\s)$.  Thus, by the
  equivalence property of $\H$ given in Lemma~\ref{lem:generichalfgcd},
  we have for any $i\geq t'$:
  $$\H(i,\s)\cdot \s = \left(\begin{smallmatrix}
  x^t(a+xW)\\ x^{t+1}H\end{smallmatrix}\right),$$
  where $\s=(P,Q)$ and $W, H$ are in
  $\R$ and $a$ is a non-zero constant. Thus it is enough to compute $\H(i,\s)$ for
  any $i \geq t'$ to recover $U$ and $V$ from the first row of the
  matrix $\H(i,\s)$. On the other hand for any $i<t'$, the first
  coordinate of $\Pi_1(\H(i,\s)\cdot \s)$ has a degree greater or equal
  to $1$. Thus we can apply
  Algorithm~\ref{algo:generichalfgcd} to the input $(2^i,s)$, for $i$
  from $1$ to $\lceil \log k \rceil$
  until the first polynomial of
  $\Pi_1(\H(2^i,\s)\cdot \s)$ has degree $0$ in $y$. We will find
  $2^{i_0} \geq t'$ while calling $\H$ at most $\lceil \log t' \rceil$
  times, and this will allow us to conclude.
  
  Suppose now that $\v(P)=\v(Q)$, let $\s=(P,Q)$ and $\s'=\mathbf
  G_\s\cdot \s$, where $\mathbf G$ is the gcd matrix defined in
  Subsection~\ref{ssec:Q}. Then $\s'$ satisfies the assumptions of the
  previous paragraph, we can compute polynomials $U',V'$ such that
  $U'P'+V'Q'=x^t \bmod x^{t+1}$, with $\s'=(P',Q')$. Remark that the
  value of $t$ is indeed the minimal possible one for $\s$ as well,
  since $\mathbf G$ is a unit; note also that the degrees of $P',Q'$
  are $\O(d)$, and thus so are those of $U'$ and $V'$.

  Multiplying $U',V'$ by $\mathbf G$, we obtain polynomials $U'',V''$
  of degree $\O(d)$ such that $U''P+V''Q=x^t \bmod x^{t+1}$. Then,
  define $U=L_Q (L_P U'' \bmod N_u(Q))$ and $V=L_P (L_Q V'' \bmod
  N_u(P))$, with $L_P,L_Q,N_u(P),N_u(Q)$ defined as in
  Lemma~\ref{lemma:varphi}. These polynomials have prescribed degrees,
  can be computed modulo $x^{t+1}$ in time $\O(\sfM(dt))$, and the
  same proof as in Lemma~\ref{lemma:varphi} shows that
  $UP+VQ=x^t \bmod x^{t+1}$.
\end{proof}


\section{The first non-zero coefficient}\label{sec:first-coeff}

Before computing the first $k$ coefficients of the resultant
$R(x)\in\K[x]/\langle x^k\rangle$ of two polynomials $P,Q \in \R$, we focus on
computing the first non-zero coefficient of $R$. The following lemma
will allow us to compute it by recurrence.

\begin{Lemma}
    \label{lem:recurrence}
    Let $M,N \in \R$ and $U\in\U(\R)$ and an integer $t \le k$ be such
    that $MP + NQ = x^t U$ with $\deg_y(N)< \deg_y(P)$, and such that $P$
    and $N$ are
    normal and $\Pi_1(U)=1$. Furthermore,  assume that 
    $\v(P)=\v(N)=0$.

    Denote by $d_P,d_Q,d_M,d_N$ the degrees in $y$ of $P,Q,M,N$
    respectively, by $p_0 \in \K$ the coefficient of $y^{d_P}$ in $P$,
    and by $n_0 \in\K$ the coefficient of $y^{d_N}$ in $N$.  Then,
    there exists $V$ unit in $\K[x]/\langle x^k\rangle$, with $\Pi_1(V)=1$, such
    that:
    $$
    \Res(P,Q) =
    x^{t(d_P-d_N)}(-1)^{d_Nd_P}\frac{p_0^{d_N+d_Q}}{n_0^{d_M+d_P}}V
    \Res(N,M).
    $$
\end{Lemma}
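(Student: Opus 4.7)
The plan is to apply the Poisson formula to polynomial lifts in $\K[[x]][y]$ and then reduce modulo $x^k$. First, I would lift $P, Q, M, N, U$ to polynomials $\tilde P, \tilde Q, \tilde M, \tilde N, \tilde U \in \K[x][y]$: since $MP + NQ = x^t U$ holds modulo $x^k$, the discrepancy $\tilde M \tilde P + \tilde N \tilde Q - x^t \tilde U$ lies in $x^k \K[x][y]$, and since $t \le k$ I can absorb this into $\tilde U$ to make the relation exact in $\K[x][y]$. Since $\v(P)=0$ and $P$ is normal, the leading $y$-coefficient of $\tilde P$ is $p_0 \in \K^*$, so $\tilde P = p_0 \prod_i (y - \alpha_i)$ in $\overline{\K((x))}[y]$; evaluating the relation at each $\alpha_i$ gives $\tilde N(\alpha_i) \tilde Q(\alpha_i) = x^t \tilde U(\alpha_i)$. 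Taking the product over the $d_P$ roots and using the Poisson formula $\Res(\tilde P, F) = p_0^{\deg_y F}\prod_i F(\alpha_i)$ yields
\[
\Res(\tilde P, \tilde N)\,\Res(\tilde P, \tilde Q) = p_0^{d_N + d_Q - n}\, x^{t d_P}\,\Res(\tilde P, \tilde U),
\]
where $n := \deg_y \tilde U$, and the symmetric computation over the roots of $\tilde N$ gives
\[
\Res(\tilde N, \tilde P)\,\Res(\tilde N, \tilde M) = n_0^{d_M + d_P - n}\, x^{t d_N}\,\Res(\tilde N, \tilde U).
\]

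To combine these, I would first show that $\Res(\tilde P, \tilde N) \neq 0$ in $\K[[x]]$. By Gauss's lemma (applicable because $p_0 \in \K^*$ makes $\tilde P$ primitive in $\K[[x]][y]$), any common factor of $\tilde P$ and $\tilde N$ in $\K[[x]][y]$ can be taken monic in $y$; by the relation it must then divide $\tilde U$ (it is coprime to $x^t$ in $\K[[x]][y]$), so its specialization at $x=0$ is a monic divisor of $\tilde U|_{x=0} = 1 \in \K[y]$ of positive degree---a contradiction. Dividing the two identities in $\K((x))$ and using $\Res(\tilde P, \tilde N) = (-1)^{d_P d_N}\Res(\tilde N, \tilde P)$, I obtain
\[
\Res(\tilde P, \tilde Q) = (-1)^{d_P d_N}\,\frac{p_0^{d_N + d_Q}}{n_0^{d_M + d_P}}\,x^{t(d_P - d_N)}\,\tilde V\,\Res(\tilde N, \tilde M),
\]
with $\tilde V := n_0^n\,\Res(\tilde P,\tilde U) \big/ \bigl(p_0^n\,\Res(\tilde N,\tilde U)\bigr)$. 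A direct Sylvester-matrix computation (using that $\tilde U|_{x=0}=1$ as a polynomial of formal degree $n$ in $y$) shows that $\Res(\tilde P, \tilde U)|_{x=0} = p_0^n$ and $\Res(\tilde N, \tilde U)|_{x=0} = n_0^n$, so $\tilde V|_{x=0} = 1$ and $\tilde V$ is a unit of $\K[[x]]$. Reducing the identity modulo $x^k$ and setting $V := \tilde V \bmod x^k$ concludes the proof.

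The main obstacle is the nonvanishing of $\Res(\tilde P, \tilde N)$ when $t \ge 1$: reduction modulo $x$ does not force $\bar P$ and $\bar N$ to be coprime in that case, so the Gauss-plus-specialization argument above is required. A secondary subtlety is the careful accounting for the formal degree $n$ of $\tilde U$ in the Sylvester matrices, which produces the auxiliary factors $p_0^n$ and $n_0^n$ that convert the natural ratio $p_0^{d_N+d_Q-n}/n_0^{d_M+d_P-n}$ into the stated $p_0^{d_N+d_Q}/n_0^{d_M+d_P}$ while giving $\tilde V$ constant term~$1$.
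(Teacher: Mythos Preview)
Your proof is correct and follows essentially the same strategy as the paper's: derive the two identities $\Res(P,N)\Res(P,Q)=p_0^{d_N+d_Q-n}x^{td_P}\Res(P,U)$ and $\Res(N,P)\Res(N,M)=n_0^{d_M+d_P-n}x^{td_N}\Res(N,U)$, then divide them using $\Res(P,N)=(-1)^{d_Pd_N}\Res(N,P)$. The only differences are presentational---you obtain the identities via Poisson's product formula over $\overline{\K((x))}$ after an explicit lift to $\K[x][y]$, while the paper invokes the multiplicativity and degree-shift rules for resultants directly; and you spell out the nonvanishing of $\Res(\tilde P,\tilde N)$ needed to justify the division, which the paper leaves implicit.
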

\begin{proof}
    By the multiplication rules of the resultant, 
    \begin{align*}
        \Res(P,N)\Res(P,Q) &= \Res(P,NQ)\\
        \Res(N,P)\Res(N,M) &= \Res(N,MP)
    \end{align*}
    Replacing in the first equality $NQ$ by $NQ + MP = x^tU$, we
    get $\Res(P,NQ) = p_0^{d_N+d_Q-d_U}\Res(P, x^tU)$ (see for example
    \cite{GKZbook94,CLObook05}). Finally, $\Res(P,x^tU) =
    x^{td_P}\Res(P,U)$. Then since $P$ is normal of valuation zero and $\Pi_1(U)=1$,
    there exists  $W=1+x\tilde W \in \K[x]/\langle x^k\rangle$ such that
    $\Res(P,U)=p_0^{d_U}W$. Applying these arguments to
    $\Res(N,MP)$, we conclude:
    \begin{align*}
        \Res(P,N)\Res(P,Q) &= p_0^{d_N+d_Q}x^{td_P}W\\
        \Res(N,P)\Res(N,M) &= n_0^{d_M+d_P}x^{td_N}W'
    \end{align*}
    Note that the symmetry formula for the resultant implies that
    $\Res(P,N) = (-1)^{d_Pd_N}\Res(N,P)$.  Then
    dividing the two equalities, we recover the desired result.
\end{proof}

\begin{Lemma}[First non-zero coefficient]
    Let $d$ be a bound on the degrees in $y$ of $P$ and $Q$. One can
    determine whether $R=\Res(P,Q)$ vanishes in $\K[x]/\langle x^k\rangle$, and if
    not compute its first non-zero coefficient and its valuation in
    $\O(\sfM(dk)\log(k) + \sfM(d)k\log(d))$ arithmetic operations.
\end{Lemma}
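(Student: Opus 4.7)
The plan is to iterate Lemma~\ref{lem:recurrence} via the pseudo-inverse of Corollary~\ref{cor:pseudo-inverse}, reducing the computation of $\Res(P,Q)$ to that of a pair with strictly smaller $y$-degrees while accumulating the contributions to the valuation. First, apply Lemma~\ref{lem:normalres} to reduce in time $\O(\sfM(dk))$ to the case where both $P$ and $Q$ are monic in $y$; being monic, they are automatically normal of $x$-valuation zero, which ensures the hypotheses of Lemma~\ref{lem:recurrence} for the first argument at every step.

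One iteration on a monic pair $(P_i,Q_i)$ with remaining precision $k_i$ goes as follows. Invoke Corollary~\ref{cor:pseudo-inverse} to obtain $U_i, V_i$ and an integer $t_i$ with $U_i P_i + V_i Q_i \equiv x^{t_i} \bmod x^{t_i+1}$, $\deg_y V_i<\deg_y P_i$, and $\deg_y U_i<\deg_y Q_i$. If $t_i \ge k_i$, conclude $\Res(P_i,Q_i)\equiv 0\bmod x^{k_i}$ and stop. Otherwise, the minimality of $t_i$ (in the sense of Corollary~\ref{cor:pseudo-inverse}) forces $\v(V_i)=0$: if $v:=\v(V_i)\le t_i$ were positive, dividing $U_i, V_i$ by $x^v$ would yield a combination with valuation $t_i-v<t_i$ still generating $x^{t_i-v}$ up to a unit, contradicting minimality; and $v>t_i$ would force $P_i$ to divide $x^{t_i}$ in $\R$, impossible since $P_i$ is monic of positive $y$-degree. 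Normalize $V_i=L_i N_i$ via Lemma~\ref{lem:normalization}, rescale $N_i$ by its leading $y$-coefficient (a nonzero constant in $\K$) so that $N_i$ is monic, and absorb $L_i$ together with the scalar into the right-hand side, obtaining $M_i P_i + N_i Q_i = x^{t_i} U_i'$ with $N_i$ monic of $y$-degree $<\deg_y P_i$, $\v(N_i)=0$, and $U_i'$ a unit with $\Pi_1(U_i')=1$. Lemma~\ref{lem:recurrence} then yields
$$\Res(P_i,Q_i)\;=\;x^{v_i}\,c_i\,W_i\,\Res(N_i, M_i),$$
with $v_i=t_i(\deg_y P_i-\deg_y N_i)\ge t_i$, $c_i\in\K^\times$ explicit, and $W_i$ a unit with $\Pi_1(W_i)=1$. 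Recurse on $(P_{i+1},Q_{i+1})$, the monicized $(N_i,M_i)$, with $k_{i+1}=k_i-v_i$. The recursion halts when $k_{i+1}\le 0$ (giving $\Res(P,Q)\equiv 0\bmod x^k$) or when one of the arguments reaches $y$-degree zero, in which case $\Res(P_i,Q_i)=Q_i^{\deg_y P_i}$ (or its symmetric) and its valuation and first non-zero coefficient are immediate. The valuation of $\Res(P,Q)$ equals $\sum_i v_i$ plus that of the base case, and its first non-zero coefficient equals $\prod_i c_i$ times that of the base case, since every $W_i$ satisfies $\Pi_1(W_i)=1$.

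For the cost, write $d_i\le d$ and note $\sum_i t_i\le\sum_i v_i\le k$, since we stop as soon as the accumulated valuation reaches $k$. Step $i$ costs $\O(\sfM(d\,t_i)\log t_i+\sfM(d)\,t_i\log d)$ by Corollary~\ref{cor:pseudo-inverse}, plus $\O(\sfM(d\,t_i))$ for the normalization and matrix products at the precision actually required. Super-linearity of $\sfM$ gives $\sum_i\sfM(d\,t_i)\le\sfM(d\sum_i t_i)\le\sfM(dk)$, so the total fits in $\O(\sfM(dk)\log k+\sfM(d)\,k\log d)$. The main obstacle is maintaining the invariant that the first polynomial of each pair is monic---hence normal of valuation zero---so that Lemma~\ref{lem:recurrence} applies at every iteration, together with a careful precision management ensuring that the normalization work at each step is bounded by $\sfM(d\,t_i)$ rather than $\sfM(dk)$, which is exactly what allows the total to telescope to the claimed bound.
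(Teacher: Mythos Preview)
Your overall strategy matches the paper's: reduce to monic inputs via Lemma~\ref{lem:normalres}, then iterate Lemma~\ref{lem:recurrence} using the pseudo-inverses from Corollary~\ref{cor:pseudo-inverse}. Two points, however, need repair.

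First, and most importantly, your termination criterion (``one argument reaches $y$-degree zero'') is too weak for the cost bound. If at some step $t_i=0$---equivalently $\Res(\Pi_1(P_i),\Pi_1(Q_i))\ne 0$---then $v_i=0$, $k_{i+1}=k_i$, and your recursion keeps grinding the $y$-degree down one step at a time. Each such step still costs at least $\Theta(\sfM(d)\log d)$ (the extended gcd in $\K[y]$ that Corollary~\ref{cor:pseudo-inverse} performs to produce $U_i,V_i$), yet your per-step estimate $\O(\sfM(d\,t_i)\log t_i+\sfM(d)\,t_i\log d)$ evaluates to zero when $t_i=0$. In the worst case (inputs coprime modulo $x$ from the start) you perform $\Theta(d)$ such steps, for a total of order $d\,\sfM(d)\log d$, which exceeds the target whenever $d>k$. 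The paper avoids this by computing $\Res(\Pi_1(P),\Pi_1(Q))$ at the top of each iteration and stopping immediately if it is nonzero; this forces $t\ge 1$ at every recursive step, so the number of steps is at most $k$ and the $\sfM(d)\log d$ base cost per step sums to $\sfM(d)\,k\log d$.

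Second, your $M_i=U_i/(L_i\cdot n_0)$ need not have bounded $y$-degree. The unit $L_i=\Lc(V_i)$ has the form $1+x(\cdots)$ where the higher $x$-coefficients may carry $y$-terms, so $L_i^{-1}$ in $\R$ can have $y$-degree as large as $(k-1)\deg_y L_i$, and $\deg_y M_i$ may blow up to order $kd$. This destroys the invariant $d_i\le d$ on which your telescoping cost argument rests. The paper handles this by setting $M=\Pi_{t+1}\bigl((U/A)\remy N_u(Q)\bigr)$ and computing $A^{-1}$ modulo $N_u(Q)$ via Newton iteration in $x$, which keeps $\deg_y M<\deg_y Q$ at cost $\O(\sfM(dt))$.
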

\begin{proof}
  Lemma~\ref{lem:normalres} allows us to reduce to the case where $P$
  and $Q$ are normal polynomials; the cost of this reduction is
  $\O(\sfM(dk))$.  Starting from normal $P$ and $Q$, we prove the
  result by induction; the proof actually only uses the fact that one
  polynomial, say $P$, is normal. We use an integer argument $\tau$,
  which gives us a (strict) upper bound on the valuation of the
  resultant; initially, it is set to $k$.


  Dividing by powers of $x$, we can assume that $\v(P)=\v(Q)=0$; the
  upper bound $\tau$ remains valid. We then compute the resultant
  of $\Pi_1(P)$ and $\Pi_1(Q)$ in $\K[y]$, in time
  $\O(\sfM(d)\log(d))$. If it is non-zero, we are
  done. 

  Else, let $t$ be the smallest integer such that $\left<P,Q\right> \cap
  \langle x^t\rangle = \langle x^t\rangle$; hence, $t \le \tau$, but we
  also have $t > 0$. Define function $F(d,n) = \sfM(dn)\log(n) +
  \sfM(d)n\log(d)$. Using Corollary~\ref{cor:pseudo-inverse}, we see
  that there exists a universal constant $c_1$ such that we can
  compute in $c_1 F(d,t)$ operations two polynomials $U$ and $V$ in
  $\R$ of degree less than $d$ such that $UP + VQ = x^tW$ with
  $W\in\U(\R)$, with more precisely $\deg_y(U) < \deg_y(P)$. Since
  $t$ is minimal, this implies $\v(U,V)=0$ and since $\v(P)=0$, this
  implies that $\v(V)=0$.

  If $t=\tau$, we are done. Else, let $A = \Pi_{t+1}(\Lc(V))$,  $N =
  \Pi_{t+1}(N_u(V))$ and $M = \Pi_{t+1}\left( U/A \remy N_u(Q)\right).$
  Since $A$ is a unit, these definitions imply the equality $MP+NQ=x^t
  (1+xY) + Z N_u(Q) P$, for some polynomials $Y$ and $Z$. The degree of
  the left-hand side is less than $\deg_y(P)+\deg_y(Q)$, whereas $N_u(Q) 
  P$ is monic of degree $\deg_y(P)+\deg_y(Q)$. Hence, the previous
  equality shows that $Z N_u(Q) P$ vanishes modulo $x^{t+1}$.
  The assumptions of Lemma~\ref{lem:recurrence} are
  satisfied; we can thus do a recursive call on $N$ and
  $M$, with upper bound $\tau-t$, from which we can recover our output
  using the formula in that lemma.

  In terms of complexity, all calculations giving $A,N,M$ can be 
  done in $c_2 \sfM(dt)$ operations (the only non-trivial point
    is the computation of $1/A \remy N_u(Q)$, which is done by Newton iteration 
  on $x$). Hence, the runtime $G(d,\tau)$ satisfies 
  $G(d,\tau) \le c_0 \sfM(d)\log(d) + c_1 F(d,t)+c_2 \sfM(dt) + G(d,\tau-t).$
  Using the super-linearity of $F$ in $t$ and of $\sfM$, 
  and the definition of $F$, we deduce the overall cost
  $G(d,k) =\O(F(d,k))$.
\end{proof}


\section{A differential equation}\label{sec:diffeq}

Let $P$ and $Q$ be in $\K[x,y]$, and let $R \in
\K[x]$ be their resultant with respect to $y$. We now prove
our main result:
\begin{Theorem}
If $P$ and $Q$ have degree at most $d$ and $\K$ has
  characteristic zero, or at least $k$, one can compute $R\remx x^k$
  using $\O(\sfM(dk)\log (k)+\sfM(d)k\log (d))$ operations in $\K$.
\end{Theorem}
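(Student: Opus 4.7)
The plan is to combine the pseudo-inverse computation of Section~\ref{sec:pseudoinv} with a logarithmic-derivative identity for $R$ to extract a first-order linear ODE for $R \bmod x^k$, and then integrate. First, I would apply Lemma~\ref{lem:normalres} to reduce, in $\O(\sfM(dk))$ operations, to the case where $P$ and $Q$ are monic in $y$ (and hence normal). Then, applying Corollary~\ref{cor:pseudo-inverse}, I would compute $t = \v(R)$ and $U,V \in \R$ with $UP + VQ = x^t \bmod x^{t+1}$, at cost $\O(\sfM(dt)\log t + \sfM(d)\,t\log d)$. Viewed as an equality in $\R$, this reads $UP + VQ = x^t W$ with $W := (UP+VQ)/x^t \in \R$; since $W \equiv 1 \bmod x$, $W$ is a unit in $\R$.

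Next, I would derive the ODE via the trace formula
\[
\frac{R'}{R} = \operatorname{Tr}_{\R[y]/P}\!\left(\frac{Q_x P_y - Q_y P_x}{Q\,P_y}\right),
\]
which follows, for $P$ monic, from differentiating the product-over-roots expression $R = \prod_{i} Q(\alpha_i)$ and using $\alpha_i' = -P_x(\alpha_i)/P_y(\alpha_i)$. The Euler identity $\operatorname{Tr}(f/P_y \bmod P) = [y^{d_P-1}](f \bmod P)$ combined with $V Q \equiv x^t W \pmod P$ rewrites this as
\[
\frac{R'}{R} = \frac{1}{x^t}\,\bigl[y^{d_P-1}\bigr]\!\bigl((Q_x P_y - Q_y P_x)\,V\,W^{-1} \bmod P\bigr).
\]
Since $W$ is a unit in $\R$, its reduction modulo $P$ is a unit in $\R[y]/P$, so I would compute $W^{-1} \bmod P$ by Newton iteration on $x$ in $\O(\sfM(dk))$ operations; the remaining polynomial arithmetic (one differentiation, a handful of multiplications, and two reductions modulo $P$) also fits in $\O(\sfM(dk))$.

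Using the ``first non-zero coefficient'' lemma, I would determine $t$ and $c := [x^t]R$, after which $R = c\,x^t \exp\!\bigl(\int \psi\,dx\bigr)$, where $\psi := R'/R - t/x$ is a regular power series known to precision $x^{k-t-1}$; integration and exponentiation of a length-$k$ power series cost $\O(\sfM(k))$, and this is precisely where the characteristic assumption on $\K$ enters, to invert the integers $1,\dots,k-1$. Summing the four contributions (reduction, pseudo-inverse, ODE construction, first non-zero coefficient, and final integration) yields the claimed bound $\O(\sfM(dk)\log k + \sfM(d)\,k\log d)$.

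The main obstacle is step two: deriving the ODE in a form whose coefficient $\psi$ can be evaluated within budget. The key points are that monicness of $P$ (ensured by the reduction step) makes Euclidean division modulo $P$ well-defined at full $x$-precision, and unithood of $W$ (coming from the precision $x^{t+1}$ of the pseudo-inverse identity) lets $W^{-1} \bmod P$ be obtained by Newton iteration on $x$; together these keep every intermediate object at $y$-degree $\O(d)$ and $x$-precision $\O(k)$, matching the target complexity.
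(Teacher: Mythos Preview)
Your plan mirrors the paper's proof almost step for step: reduce to monic inputs via Lemma~\ref{lem:normalres}, compute pseudo-inverses via Corollary~\ref{cor:pseudo-inverse}, derive a first-order linear ODE for $R$ through a trace identity, recover the leading term of $R$ from Section~\ref{sec:first-coeff}, and integrate. Two harmless differences: you obtain the ODE by logarithmically differentiating $R=\prod_i Q(\alpha_i)$ and invoking the Euler identity $\mathrm{Tr}_{\K(x)[y]/P}(g/P_y)=[y^{d_P-1}](g\bmod P)$, giving a single term modulo $P$; the paper instead applies Jacobi's formula $R'=R\,\mathrm{tr}(\mat A^{-1}\mat A')$ to the Sylvester matrix, producing two symmetric terms, one modulo $P$ and one modulo $Q$. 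And you invert $W$ modulo $(P,x^k)$ by Newton iteration, whereas the paper first lifts $U,V$ so that $U'P+V'Q=x^t\bmod x^{2t+k-1}$, making the correction factor congruent to $1$ at the required precision.

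Two points need repair. First, the integer $t$ returned by Corollary~\ref{cor:pseudo-inverse} is the least $s$ with $\langle P,Q\rangle\cap\langle x^s\rangle=\langle x^s\rangle$, which in general is \emph{strictly smaller} than $\mu:=\v(R)$ (take $P=y^2+x$, $Q=y^2-x$: then $t=1$ but $\mu=2$). So your $\psi$ must be $R'/R-\mu/x$, with $\mu$ coming from the first-non-zero-coefficient lemma; using the pseudo-inverse $t$ there would leave a pole. Second, if you form $W=(UP+VQ)/x^t$ \emph{inside} $\R$, then $UP+VQ$ is only known modulo $x^k$ and hence $W$ only modulo $x^{k-t}$; propagating this through your formula yields $x^tR'/R$ merely modulo $x^{k-t}$, i.e.\ $R'/R$ modulo $x^{k-2t}$, which is not enough when $t\ge 2$. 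The fix is cheap: compute $UP+VQ$ as an honest polynomial in $\K[x,y]$ (its $x$-degree is below $2k$, so this costs $\O(\sfM(dk))$), giving $W$ exactly and $W^{-1}\bmod(P,x^k)$ by Newton iteration as you say. This is precisely what the paper's lifting step to precision $x^{2t+k-1}$ accomplishes by other means. With these two corrections your argument goes through and matches the claimed bound.
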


\noindent
{\bf \small First reduction:} If the degree in $x$ of $P$ or $Q$ is
greater than or equal to $k$, let $P_k = P \remx x^k$ and $Q_k
\remx x^k$ and let $d_P,d_Q$ be the degrees in $y$ of $P$ and $Q$
respectively. Then,
$$\Res_{d_P,d_Q}(P_k, Q_k) = \Res(P,Q) \mod x^k,$$ where
$\ResIn_{d_P,d_Q}$ denotes the determinant of the Sylvester matrix
associated to the degrees $(d_P,d_Q)$. If both leading coefficients of
$P$ and $Q$ have a valuation less than $k$, then
$\ResIn_{d_P,d_Q}(P_k, Q_k) = \ResIn(P_k,Q_k)$. If both leading
coefficients have a valuation greater than or equal to $k$ then
$\ResIn(P,Q) = 0 \bmod x^k$.  Finally, if only the leading coefficient
of say $Q$ has a valuation greater or equal to $k$, then we have
$\Res_{d_P,d_Q}(P_k,Q_k) = p_0^{d_Q-\deg_y(Q_k)}\Res(P_k, Q_k)$, where
$p_0$ is the leading coefficient of $P$ in $y$. Thus, in any case, we
can recover the resultant of $P$ and $Q$ modulo $x^k$ from that of
$P_k$ and $Q_k$, in a time that fits in our runtime bound.

\smallskip\noindent {\bf \small Second reduction:} Assume that $P$ and
$Q$ have degree at most $d$ in $y$, with coefficients of degree less
than $k$. Using Lemma~\ref{lem:normalres}, in time $\O(\sfM(dk))$, we
can reduce the problem of computing $\Res(P,Q) \remx x^k$ to a similar
problem with $P$ and $Q$ both monic in $y$, reduced modulo $x^k$, and
with no degree increase in $y$ (that lemma proves the existence of
suitable polynomials in $\R$, so we take their canonical lifts to
$\K[x,y]$). Hence, below, we suppose we are in this case.

With the results of the previous section, we can test if
$R=\Res(P,Q)$ vanishes modulo $x^k$, and if not, find its valuation
$\mu \le k$ and the coefficient $c$ of $x^\mu$, in time
$\O(\sfM(dk)\log k+\sfM(d)k\log d)$. We thus assume that $R \remx x^k$
is non-zero, as otherwise we are done.  The key to our algorithm is
the following differential equation satisfied by $R$ over $\K(x)[y]$;
below, we write $d_P = \deg_y(P)$ and $d_Q=\deg_y(Q)$.
\begin{Lemma}
The following equality holds:
\begin{align*}
 \frac{d R}{d x} & = R \left (
{\rm coeff}\left( \frac 1P \frac{dP}{dx} \frac{dQ}{dy} \remy Q, y^{d_Q-1} \right) \right .\\ 
&+ 
\left .{\rm coeff} \left (\frac 1Q \frac{dQ}{dx} \frac{dP}{dy} \remy P, y^{d_P-1} \right)\right ).
\end{align*}
\end{Lemma}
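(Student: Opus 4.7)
My approach is to combine Poisson's product formula for the resultant with logarithmic differentiation, and then convert the resulting symmetric-function sums into coefficients via a trace/residue identity.

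First I would assume $R\ne 0$ (otherwise the statement is trivial), so that $P$ and $Q$ are coprime in $\K(x)[y]$ and $1/P\,\remy Q$ and $1/Q\,\remy P$ are well defined. Passing to an algebraic closure of $\K(x)$, I factor the monic polynomials $P=\prod_{i=1}^{d_P}(y-\alpha_i)$ and $Q=\prod_{j=1}^{d_Q}(y-\beta_j)$ and use the Poisson formula $R=\prod_i Q(x,\alpha_i)$. Logarithmic differentiation in $x$, together with the implicit-function identity $\alpha_i'=-P_x(\alpha_i)/P_y(\alpha_i)$, gives
\[
\frac{R'}{R}=\sum_i\frac{Q_x(\alpha_i)}{Q(\alpha_i)}-\sum_i\frac{P_x(\alpha_i)Q_y(\alpha_i)}{P_y(\alpha_i)Q(\alpha_i)}.
\]

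Next, I would rewrite each symmetric sum as a coefficient via the Lagrange-interpolation identity $\sum_i F(\alpha_i)/P_y(\alpha_i)=\mathrm{coeff}(F\remy P,\,y^{d_P-1})$ (obtained by reading the top coefficient of $F\remy P$ in its Lagrange expansion). Applied with $F=Q_xP_y/Q$ it yields exactly the second term of the claim. Applied with $F=P_xQ_y/Q$ it turns the second sum above into $\mathrm{coeff}((P_xQ_y/Q)\remy P,\,y^{d_P-1})$, so the proof reduces to the symmetry
\[
\mathrm{coeff}\!\left(\tfrac{P_xQ_y}{Q}\remy P,\,y^{d_P-1}\right)+\mathrm{coeff}\!\left(\tfrac{P_xQ_y}{P}\remy Q,\,y^{d_Q-1}\right)=0.
\]

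The main obstacle is this last identity. I would prove the more general fact that for coprime monic $P,Q$ and any $F$ with $\deg_yF\le d_P+d_Q-2$ the analogous two coefficients sum to zero. This follows from the partial-fraction decomposition $F/(PQ)=L(y)+A/P+B/Q$ with $\deg_yA<d_P$ and $\deg_yB<d_Q$: uniqueness forces $A=(F/Q)\remy P$ and $B=(F/P)\remy Q$; the two coefficients in the identity are then the $y^{-1}$-coefficients of $A/P$ and $B/Q$ in their expansions at $y=\infty$; their sum equals the $y^{-1}$-coefficient of $F/(PQ)$ (the polynomial part $L$ contributes nothing); and this vanishes since $F/(PQ)=O(y^{-2})$ at infinity. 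Finally, $P$ and $Q$ being monic in $y$ forces $\deg_yP_x\le d_P-1$ and $\deg_yQ_y=d_Q-1$, giving the required bound $\deg_y(P_xQ_y)\le d_P+d_Q-2$. The characteristic hypothesis on $\K$ enters only to justify the differential calculus and the non-vanishing of the leading coefficient of $Q_y$.
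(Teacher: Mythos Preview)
Your argument is correct, but it takes a different route from the paper. The paper's proof is a one-line application of Jacobi's formula: writing $R=\det\mathbf A$ for the Sylvester matrix $\mathbf A$, one has $R'=R\cdot\mathrm{tr}(\mathbf A^{-1}\mathbf A')$. Identifying $\mathbf A$ with the map $(F,G)\mapsto FP+GQ$, its derivative with $(F,G)\mapsto FP_x+GQ_x$, and its inverse with $S\mapsto(S/P\remy Q,\ S/Q\remy P)$, the product $\mathbf A^{-1}\mathbf A'$ is block-diagonal, with blocks the multiplication operators by $P_x/P$ in $\K(x)[y]/(Q)$ and by $Q_x/Q$ in $\K(x)[y]/(P)$. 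The traces of these two blocks give the two claimed coefficients directly, via the standard identity $\mathrm{Tr}_{K[y]/(Q)}(g)=\mathrm{coeff}(g\,Q_y\remy Q,\,y^{d_Q-1})$, which is exactly your Lagrange identity read in the other direction.

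So the paper obtains both terms symmetrically in one stroke, without passing to an algebraic closure, without assuming $P$ is separable over $\K(x)$ (which you need for the $\alpha_i$ to be distinct and for $P_y(\alpha_i)\ne 0$), and without your ``main obstacle'' partial-fraction symmetry. Your approach pays for its asymmetric starting point---Poisson's formula uses only the roots of $P$---with that extra step. The residue argument you give for the symmetry is elegant in its own right, but note that your last sentence is slightly off: the degree bound $\deg_y(P_xQ_y)\le d_P+d_Q-2$ holds automatically since $P$ is monic (so $\deg_yP_x\le d_P-1$) and $\deg_yQ_y\le d_Q-1$ always; no hypothesis on the characteristic is needed for this lemma, and the paper uses none.
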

\begin{proof}
  Let $\mat{A}$ be the Sylvester matrix of $P$ and $Q$ with respect to the variable $y$,
  so that $R=\det(\mat{A})$. Since $R \remx x^k$ is non-zero, $\mat{A}$ is a unit
  over $\K(x)$. 

  Differentiating this equality with respect to $x$, we
  obtain $\frac{d R}{d x} = R \, {\rm trace}\left( \mat{A}^{-1}
  \frac{d \mat{A}}{d x} \right ).$ Since $\mat{A}$ is the matrix of
  the mapping $(F,G) \mapsto F P + G Q$ (in the canonical monomial
  bases),  $\frac{d \mat{A}}{d x}$ represents 
  $(F,G) \mapsto F \frac{dP}{dx} + G \frac{dQ}{dx}$, in the same
  bases.  Similarly, the inverse $\mat{A}^{-1}$ represents the mapping
  $S \mapsto (S/P \remy Q, S/Q \remy P)$. Due to the block structure
  of the matrix $\mat{A}^{-1}   \frac{d \mat{A}}{d x}$, 
  its trace is the trace of the block-diagonal
 operator
  $$ (F,G) \mapsto \left (F \frac 1P \frac{dP}{dx} \remy Q,\ G \frac 1Q \frac{dQ}{dx} \remy P \right ).$$
  This mapping being block-diagonal, its trace is the sum of the traces of its two 
  components, 
  $ F\mapsto  F\frac 1P  \frac{dP}{dx} \remy Q$
and $  G\mapsto   G \frac 1Q \frac{dQ}{dx} \remy P$. To conclude,  remark that these traces
  are respectively
   ${\rm coeff} \left ( \frac 1P \frac{dP}{dx} \frac{dQ}{dy} \remy Q, y^{d_Q-1} \right)$
  and ${\rm coeff} \left (  \frac 1Q \frac{dQ}{dx} \frac{dP}{dy} \remy P, y^{d_P-1} \right). $
 \end{proof}

Using Corollary~\ref{cor:pseudo-inverse},
we can compute in time $\O(\sfM(dk)\log (k) + \sfM(d)k\log (d))$
we know two cofactors $U$ and $V$ in $\K[x,y]$, of degree less than $k$ in $x$ such that
$\deg(U,y) < d_Q$, $\deg(V,y) < d_P$ and
$U P + V Q = x^t \bmod x^{t+1},$ with $t$ chosen minimal, so that $t
\le \mu$. Then, $t \le k$, since otherwise $R = 0 \mod x^k$.

From this, we deduce $U',V'$ with the same degree constraints on $y$
and degree less than $2t+k-1$ in $x$, such that $$ U'P + V'Q = x^t
\bmod x^{2t+k-1}:$$ compute $W$ such that $UP+VQ = x^t(1+xW) \mod
x^{2t+k-1}$, then the inverses $A$ (resp.\ $B$) of $1+xW$ modulo
$\langle Q,x^{2t+k-1}\rangle$ (resp.\ modulo $\langle
P,x^{2t+k-1}\rangle$), and let $U' = UA \remy Q$ and $V' = VB \remy
P$. The only non-trivial point is the inversions, which are done by
Newton iteration with respect to $x$; overall, the cost of this step
is $\O(\sfM(dk))$.

The defining equality for $U'$ and $V'$ can be rewritten as $$U'P +
V'Q = x^t(1+x^{t+k-1}S),$$ for some $S$ in $\K[x,y]$. Their
degree constraints then show that the inverse of $P$ modulo $Q$ is
$U'/(x^t(1+x^{t+k-1}S)) \remy Q$; similarly, the inverse of $Q$ modulo
$P$ is $V'/(x^t(1+x^{t+k-1}S)) \remy P$. This further implies that $x^t
\frac {dR}{dx}$ is equal to
\begin{align*}
  R \left (\, {\rm coeff} \left  ( \frac {U'}{1+x^{t+k-1}S} \frac{dP}{dx} \frac{dQ}{dy} \remy Q, y^{d_Q-1} \right) \right. \\
  + \left . {\rm coeff}  \left ( \frac {V'}{1+x^{t+k-1}S} \frac{dQ}{dx} \frac{dP}{dy} \remy P, y^{d_P-1} \right ) \right ).
\end{align*}
Taking this equality modulo $x^{t+k-1}$, we obtain a relation of the form
$x^t \frac {dR}{dx} = R F \bmod x^{t+k-1},$
with
\begin{align*}
  F &=   \left .\, {\rm coeff} \left  ( U' \frac{dP}{dx} \frac{dQ}{dy} \remy Q, y^{d_Q-1} \right) \right. \\
  & + \left . {\rm coeff}  \left (V' \frac{dQ}{dx} \frac{dP}{dy} \remy P, y^{d_P-1} \right ) \right . \bmod x^{t+k-1}.
\end{align*}
Because $P$ and $Q$ are both monic in $y$, once $U'$ and $V'$ are
known, we can compute $F$ using $\O(\sfM( (t+k) d ))$ operations in
$\K$, which is $\O(\sfM(kd ))$.

Recall that $R$ is has the form $c x^\mu + \cdots$, for
some $\mu < k$, so that $\frac {dR}{dx}$ has the form $\mu c x^{\mu-1} +
\cdots$. Thus, the Laurent series $\frac 1R \frac {dR}{dx}$ has valuation
at least $-1$, so that $F$ has valuation at least $t-1$. Dividing by $x^{t-1}$
on both sides, we obtain 
$x \frac {dR}{dx} = R \tilde F \bmod x^{k},$ with $\tilde F = F /
x^{t-1}$.  From now on, let us assume that the characteristic $p$ of
the base field is at least equal to $k$, or zero. Then, this relation
determines $R \bmod x^k$ up to a constant factor, and knowing the
initial condition $c$ allows us to deduce $R\bmod x^k$
unambiguously. Given $\tilde F$, this is done by means
of~\cite[Theorem~2]{BoChLeSaSc12}, which allows us to compute $R\bmod
x^k$ in $\O(\sfM(k))$ operations in $\K$. Summing all the costs seen
so far concludes the proof of our theorem.

\smallskip\noindent{\bf \small Acknowledgements.} The authors
sincerely thank the reviewers for their careful reading and
suggestions. Moroz is supported by project Singcast
(ANR-13-JS02-0006); Schost is supported by NSERC.

{\small
\bibliographystyle{abbrv}
\bibliography{resultant_series}

\begin{thebibliography}{10}

\bibitem{AuVa00}
P.~Aubry and A.~Valibouze.
\newblock Using {G}alois ideals for computing relative resolvents.
\newblock {\em J. Symb. Comp.}, 30(6):635--651, 2000.

\bibitem{BeEmSa11}
E.~Berberich, P.~Emeliyanenko, and M.~Sagraloff.
\newblock An elimination method for solving bivariate polynomial systems:
  Eliminating the usual drawbacks.
\newblock In {\em ALENEX}, pages 35--47. SIAM, 2011.

\bibitem{BoChLeSaSc12}
A.~Bostan, M.~F.~I. Chowdhury, R.~Lebreton, B.~Salvy, and {\'E}.~Schost.
\newblock Power series solutions of singular $(q)$-differential equations.
\newblock In {\em ISSAC'12}, pages 107--114. ACM, 2012.

\bibitem{BoFlSaSc06}
A.~Bostan, P.~Flajolet, B.~Salvy, and {\'E}.~Schost.
\newblock Fast computation of special resultants.
\newblock {\em Journal of Symbolic Computation}, 41(1):1--29, 2006.

\bibitem{BoLaMoPoRo14}
Y.~Bouzidi, S.~Lazard, G.~Moroz, M.~Pouget, and F.~Rouillier.
\newblock Improved algorithm for computing separating linear forms for
  bivariate systems.
\newblock In {\em ISSAC'14}, pages 75--82. ACM, 2014.

\bibitem{BoLaMoPoRoSa15}
Y.~Bouzidi, S.~Lazard, G.~Moroz, M.~Pouget, F.~Rouillier, and M.~Sagraloff.
\newblock {Improved algorithms for solving bivariate systems via {R}ational
  {U}nivariate {R}epresentations}.
\newblock Research report, {Inria}, 2015.

\bibitem{BoLaPoRo13}
Y.~Bouzidi, S.~Lazard, M.~Pouget, and F.~Rouillier.
\newblock Rational univariate representations of bivariate systems and
  applications.
\newblock In {\em ISSAC'13}, pages 109--116. ACM, 2013.

\bibitem{BoLaPoRo13b}
Y.~Bouzidi, S.~Lazard, M.~Pouget, and F.~Rouillier.
\newblock Separating linear forms for bivariate systems.
\newblock In {\em ISSAC'13}, pages 117--124. ACM, 2013.

\bibitem{CaKa91}
D.~G. Cantor and E.~Kaltofen.
\newblock On fast multiplication of polynomials over arbitrary algebras.
\newblock {\em Acta Informatica}, 28(7):693--701, 1991.

\bibitem{Caruso15}
X.~Caruso.
\newblock Resultants and subresultants of $p$-adic polynomials.
\newblock abs/1507.06502, 2015.

\bibitem{CLObook05}
D.~Cox, J.~Little, and D.~O'Shea.
\newblock {\em Using algebraic geometry}.
\newblock Graduate Texts in Mathematics. Springer-Verlag, 1998.

\bibitem{EmSa12}
P.~Emeliyanenko and M.~Sagraloff.
\newblock On the complexity of solving a bivariate polynomial system.
\newblock In {\em ISSAC'12}, pages 154--161. ACM, 2012.

\bibitem{GGbook13}
J.~\gathen{von zur} Gathen and J.~Gerhard.
\newblock {\em Modern Computer Algebra}.
\newblock Cambridge University Press, third edition, 2013.

\bibitem{GaLu03}
J.~\gathen{von zur} Gathen and T.~L{\"u}cking.
\newblock Subresultants revisited.
\newblock {\em Theoretical Computer Science}, 297(1–3):199--239, 2003.

\bibitem{GaSh92}
J.~\gathen{von zur} {Gathen} and V.~Shoup.
\newblock Computing {F}robenius maps and factoring polynomials.
\newblock {\em Computational Complexity}, 2(3):187--224, 1992.

\bibitem{GKZbook94}
I.~Gelfand, M.~Kapranov, and A.~Zelevinsky.
\newblock {\em Discriminants, Resultants, and Multidimensional Determinants}.
\newblock Modern Birkh{\"a}user Classics. Birkh{\"a}user Boston, 2008.

\bibitem{GiLeSa01}
M.~Giusti, G.~Lecerf, and B.~Salvy.
\newblock A {G}r{\"o}bner free alternative for polynomial\ system solving.
\newblock {\em J. Complexity}, 17(2):154--211, 2001.

\bibitem{GoKa96}
L.~Gonz\'alez-Vega and M.~E. Kahoui.
\newblock An improved upper complexity bound for the topology computation of a
  real algebraic plane curve.
\newblock {\em J. Complexity}, 12(4):527 -- 544, 1996.

\bibitem{ImMoPo15}
R.~Imbach, G.~Moroz, and M.~Pouget.
\newblock Numeric certified algorithm for the topology of resultant and
  discriminant curves.
\newblock abs/1412.3290, 2014.

\bibitem{KaVi04}
E.~Kaltofen and G.~Villard.
\newblock On the complexity of computing determinants.
\newblock {\em Comput. Complexity}, 13(3-4):91--130, 2004.

\bibitem{Knu70}
D.~E. Knuth.
\newblock The analysis of algorithms.
\newblock In {\em {Congr\`es int. Math., Nice, France}}, volume~3, pages
  269--274, 1970.

\bibitem{KoSa15}
A.~Kobel and M.~Sagraloff.
\newblock On the complexity of computing with planar algebraic curves.
\newblock {\em Journal of Complexity}, 31(2):206--236, 2015.

\bibitem{LeSc12}
R.~Lebreton and {\'E}.~Schost.
\newblock Algorithms for the universal decomposition algebra.
\newblock In {\em ISSAC'12}, pages 234--241. ACM, 2012.

\bibitem{Lehobey97}
F.~Lehobey.
\newblock Resolvent computations by resultants without extraneous powers.
\newblock In {\em ISSAC'97}, pages 85--92. ACM, 1997.

\bibitem{Musser1975}
D.~R. Musser.
\newblock Multivariate polynomial factorization.
\newblock {\em J. Assoc. Comput. Mach.}, 22:291--308, 1975.

\bibitem{Reischert97}
D.~Reischert.
\newblock Asymptotically fast computation of subresultants.
\newblock In {\em ISSAC'97}, pages 233--240. ACM, 1997.

\bibitem{Sch71}
A.~Sch{\"o}nhage.
\newblock Schnelle {Berechnung} {von} {Kettenbruchentwicklungen}.
\newblock {\em Acta Inform.}, 1:139--144, 1971.

\bibitem{Soicher81}
L.~Soicher.
\newblock The computation of {G}alois groups.
\newblock Master's thesis, Concordia University, 1981.

\end{thebibliography}
}

\end{document}